\NeedsTeXFormat{LaTeX2e}
[1994/12/01]
\documentclass[10pt, reqno]{amsart}
\usepackage{a4wide}
\usepackage[margin=1.9cm]{geometry}
\usepackage[english, activeacute]{babel}
\usepackage{amsmath,amsthm,amsxtra}
\usepackage{epsfig}
\usepackage{amssymb}
\usepackage{latexsym}
\usepackage{amsfonts}
\usepackage{hyperref}
\pagestyle{headings}
\usepackage{multicol}


\title{Nonlinear stability of higher order mKdV breathers}
\author{Miguel A. Alejo}
\thanks{M. A. was partially funded by CNPq grant no. 305205/2016-1.}
\author{Eleomar Cardoso}
\address{Departamento de Matem\'atica, Universidade Federal de Santa Catarina, Brasil}
\email{miguel.alejo@ufsc.br}

\email{eleomar.junior@ufsc.br}
\date{\today}
\subjclass[2000]{Primary 37K15, 35Q53; Secondary 35Q51, 37K10}
\keywords{Higher order mKdV equation, Breather, stability, integrability}
\thanks{}


\chardef\bslash=`\\ 





\hfuzz1pc 


\newtheorem{thm}{Theorem}[section]
\newtheorem{cor}[thm]{Corollary}
\newtheorem{lem}[thm]{Lemma}

\newtheorem{defn}[thm]{Definition}

\theoremstyle{remark}
\newtheorem{rem}{Remark}[section]

\numberwithin{equation}{section}

\newcommand{\R}{\mathbb{R}}
\newcommand{\N}{\mathbb{N}}

\newcommand{\al}{\alpha}
\newcommand{\bt}{\beta}
\newcommand{\ga}{\gamma}

\newcommand{\spawn}{\operatorname{span}}
\newcommand{\sech}{\operatorname{sech}}




%


\newcommand{\be}{\begin{equation}}
\newcommand{\ee}{\end{equation}}
\newcommand{\bp}{\begin{proof}}
\newcommand{\ep}{\end{proof}}
\newcommand{\bel}{\begin{equation}\label}
\newcommand{\eeq}{\end{equation}}
\newcommand{\bea}{\begin{eqnarray}}
\newcommand{\eea}{\end{eqnarray}}
\newcommand{\bee}{\begin{eqnarray*}}
\newcommand{\eee}{\end{eqnarray*}}
\newcommand{\ben}{\begin{enumerate}}
\newcommand{\een}{\end{enumerate}}
\newcommand{\nonu}{\nonumber}

\newcommand{\ms}{\medskip}


 




\newcommand{\eval}[2][\right]{\relax
  \ifx#1\right\relax \left.\fi#2#1\rvert}



\begin{document}
\begin{abstract}
We are interested in stability results for breather solutions of the 5th, 7th and 9th order mKdV equations.
We  show that these higher order mKdV breathers are  stable in $H^2(\R)$, in the same way as \emph{classical} mKdV breathers. 
We also show that breather solutions of the 5th, 7th and 9th order mKdV equations satisfy  the same stationary fourth order
nonlinear elliptic equation as the mKdV breather, independently of the order, 5th, 7th or 9th, considered.

\end{abstract}
\maketitle \markboth{Nonlinear stability of higher order mKdV breathers}{Miguel A. Alejo and Eleomar Cardoso}
\renewcommand{\sectionmark}[1]{}

\section{Introduction}
In this note  we extend  previous results on the stability of breather solutions of the \emph{focusing} modified Korteweg-de Vries (mKdV) equation (see \cite{AM}),
\be\label{mkdv}
u_{t} +(u_{xx}+ 2u^3)_x=0,\quad u(t,x)\in\R,\\
\ee
\noindent
to new breather solutions of  higher order \emph{focusing} versions of \eqref{mkdv}. Namely, we are going to focus on\newline

\medskip
\noindent
the \emph{focusing} 5th-order mKdV equation

\be\begin{aligned}\label{5mkdv}
& u_{t} +(u_{4x} + f_5(u))_x=0,\\
&\\
&f_5(u):=~10uu_x^2 +10u^2u_{xx} + 6u^5,\\
\end{aligned}\ee
\medskip
\noindent
 the \emph{focusing} 7th-order mKdV equation
 
\be\begin{aligned}\label{7mkdv}
&u_{t} +(u_{6x} + f_7(u))_x=0,\\
&\\
&f_7(u):=~14u^2u_{4x} +56uu_xu_{3x} + 42uu_{xx}^2 + 70u_x^2u_{xx} + 70u^4u_{xx} + 140u^3u_x^2 + 20u^7,
\end{aligned}\ee

\medskip
\noindent
and the \emph{focusing} 9th-order mKdV 

\be\label{9mkdv}\begin{aligned}
&u_{t} + (u_{8x} + f_9(u))_x=0,\\
&\\
&f_9(u):=~18u^2u_{6x} + 108uu_xu_{5x} +228uu_{2x}u_{4x} + 210(u_x)^2u_{4x} + 126u^4u_{4x} + 138u(u_{3x})^2 \\
& + 756u_xu_{2x}u_{3x} + 1008u^3u_xu_{3x} + 182(u_{2x})^3 + 756u^3(u_{2x})^2 + 3108u^2(u_x)^2u_{2x}  \\
& + 420u^6u_{2x} + 798u(u_x)^4 + 1260u^5(u_x)^2 + 70u^9,\\
&\\
\end{aligned}\ee

\noindent 
and which we will denote them as 5th, 7th and 9th-mKdV equations hereafter. All these higher order mKdV equations are members of an infinite family of equations, the so call
\emph{focusing} mKdV hierarchy of equations, as it was shown by Alejo-Cardoso \cite{AleCar} (see \cite{JFG} for a \emph{defocusing} mKdV version of this hierarchy).
Note that we are only interested in \emph{focusing} mKdV versions since these models are the only mKdV equations bearing regular (not singular) and real \emph{breather} solutions. 
Moreover, other  higher order mKdV cases, (e.g. $(2n+1)th$-mKdV,~$n\geq5$) will not be treated here, since beside increasing the number of terms in each equation of the higher 
order hierarchy (see Appendix \ref{9mkdvApp}), we have not 
at hand a global well posedness theory of them in a Sobolev space $H^s(\R)$ with $s\leq2$, as it was pointed out by Gr\"unrock \cite[p.506,~Cor.2.1]{Gru}. 
Since our stability result is stated taking into account small perturbations in $H^2(\R)$, only  higher order mKdV equations with a Cauchy problem defined in a Sobolev space $H^s(\R),~s\leq2$  are going to be considered here, namely the 5th, 7th and 9th-mKdV equations (see \cite{Lin}, \cite{Gru} for further reading).



\medskip

These higher order mKdV equations are a well-known \emph{completely integrable} set of models \cite{AC,Ga,La}, with infinitely many
conservation laws. On the other hand, solutions of \eqref{5mkdv}, \eqref{7mkdv} and \eqref{9mkdv} are invariant under space and time translations. 
Indeed, for any $t_0, x_0\in \R$, $u(t-t_0, x-x_0)$ is also a solution of both equations. Even more,  $-u$ is also a solution of \eqref{5mkdv}, \eqref{7mkdv} 
and \eqref{9mkdv} for any previously given solution $u$.

\medskip

About the Cauchy problem of higher order versions of \eqref{mkdv}, Linares by using a contraction mapping argument showed  in \cite{Lin} that the initial value problem for the
5th-mKdV equation is locally well-posed at $H^2(\R)$. Kwon, \cite{Kwon}, obtained a better result: the 5th-mKdV equation is
locally well-posed at $H^s(\R),~s\geq\frac{3}{4}$. Finally, Gr\"unrock, \cite{Gru}, deduced well-possedness results to other
higher-order mKdV equations at Theorem 2.1. This same author established that 7th-mKdV equation is locally well-posed at
$H^s(\R),~s\geq\frac{5}{4}$. The Cauchy problem for the 5th-mKdV equation is globally well-posed at $H^s(\R),~s\geq1$ and in the case
of the 7th and 9th-mKdV \eqref{7mkdv}-\eqref{9mkdv} equations at $H^s(\R),~s\geq2$. See e.g. Linares \cite{Lin}, Kwon \cite{Kwon} and Gr\"unrock \cite{Gru} for further details. 
Note moreover that we have the following inner relation between mKdV
 \be\label{v2mkdv}
u_{t} = - \partial_x(u_{xx}+ 2u^3),
\ee
and its higher order versions, namely the 5th-mKdV,
\be\begin{aligned}\label{v25mkdv}
& u_{t} = -\partial_{x}\Big(\partial_{x}^2(u_{xx}+ 2u^3) - (2uu_x^2 - 4u^2u_{xx} - 6u^5)\Big),
\end{aligned}\ee
\medskip
\noindent
 the 7th-mKdV
\be\begin{aligned}\label{v27mkdv}
u_{t} =&  -\partial_{x}\Big(\partial_{x}^4(u_{xx}+ 2u^3) - \partial_x^2(2uu_x^2 - 4u^2u_{xx} - 6u^5) \\
&- (4uu_xu_{3x} - 4u^2u_{4x} - 2uu_{xx}^2 - 40u^4u_{xx} - 20u^3u_x^2 - 20u^7)\Big).
\end{aligned}\ee
\medskip
\noindent
and the 9th-mKdV
\begin{center}
\be\begin{aligned}\label{v29mkdv}
& \qquad \qquad\qquad\qquad\qquad u_{t} =  -\partial_{x}\Big[\partial_{x}^6(u_{xx}+ 2u^3) - \partial_x^4(2uu_x^2 - 4u^2u_{xx} - 6u^5)\\ 
& \qquad \qquad\qquad - \partial_x^2(4uu_xu_{3x} - 4u^2u_{4x} - 2uu_{xx}^2 - 40u^4u_{xx} - 20u^3u_x^2 - 20u^7) +\Big(8u^2u_{6x} - 4uu_{6x}\\
&+ 26uu_xu_{5x} - 16 u_xu_{5x} + 52 uu_{xx}u_{4x} - 28u_{xx}u_{4x} + 39u_x^2u_{4x} +39u_{xx}u_{4x} + 56u^4u_{4x} + 24uu_{3x}^2 - 16u_{3x}^2\\
& + 84u_xu_{xx}u_{3x} + 168u^3u_xu_{3x} + 12u_{xx}^3 + 196u^3u_{xx}^2 + 168u^2u_x^2u_{xx} + 280u^6u_{xx} - 42uu_x^4 + 420u^5u_x^2 + 70u^9\Big)\Big].
\end{aligned}\ee
\end{center}
\noindent
In the case of the 5th, 7th and 9th-mKdV equations \eqref{5mkdv}-\eqref{7mkdv}-\eqref{9mkdv}, the profile of their soliton solutions is
completely similar to the well known {\it$\sech$}  mKdV soliton profile,
and it is explicitly given   by the formula (we denote by $v_5,v_7,v_9$ the speeds of 5th, 7th and 9th order solitons)
\be\label{fexplicita2}\begin{array}{ll} u(t,x) := Q_{c} (x-v_it)|_{i=5,7,9},\quad v_5=c^2,~v_7=c^3,~v_9=c^4\\\\
Q_{c} (s) :=\sqrt{c} \sech(\sqrt{c} s),~c>0.\end{array}\ee
%
Moreover, it is easy to see, by substitution that both
5th, 7th and 9th-mKdV soliton solutions $Q_{c}$ \eqref{fexplicita2} satisfy the \emph{same} nonlinear stationary elliptic equation
\be\label{eqQc}\begin{array}{ll}
Q_c'' -c\, Q_c + 2Q_c^3=0, \quad Q_c>0, \quad Q_{c}\in H^1(\R).\\
\end{array}\ee
\noindent
Note that this second order ODE is precisely the one satisfied by the mKdV \emph{classical} soliton. Moreover, note that the soliton
solution \eqref{fexplicita2} of the 5th, 7th and 9th-mKdV equations also satisfy the 4th, 6th and 8th order elliptic ODEs  coming naturally from
integration in space of the 5th, 7th and 9th order mKdV equations \eqref{5mkdv}-\eqref{7mkdv}-\eqref{9mkdv} respectively. Namely, 5th, 7th and 9th higher
order mKdV solitons satisfy the following nonlinear stationary elliptic equations:

\be\label{5thODE}
 Q_{c}^{(iv)} -c^2\, Q_{c} + f_5(Q_c) =0,
\ee
\noindent
\medskip
\noindent
\begin{align}\label{7thODE}
&Q_{c}^{(vi)} - c^3Q_{c} + f_7(Q_c) = 0,
\end{align}
\medskip\noindent
and
\begin{align}\label{9thODE}
&Q_{c}^{(viii)} - c^4Q_c + f_9(Q_c) = 0.
\end{align}
Instead integrating directly in space \eqref{5mkdv}, \eqref{7mkdv} and \eqref{9mkdv}, another way to check the validity of \eqref{5thODE}, \eqref{7thODE} and \eqref{9thODE} is by using the
lowest order nonlinear stationary elliptic  equation \eqref{eqQc} satisfied by all higher order mKdV solitons. For instance, in the case of \eqref{5thODE},
we just
substitute and obtain:
\begin{align}\label{eqQc2}
&Q_{c}^{(iv)} -c^2\, Q_{c} + f_5(Q_c)\nonu\\
&= Q_{c}^{(iv)} -c^2\, Q_{c} + 10(Q_{c}^{'})^2Q_{c} + 10Q_{c}^2Q_{c}^{''} + 6Q_{c}^5\nonu\\%
&= (cQ_{c}-2Q_{c}^3)^{''} - c^2Q_{c} + 10 (Q_{c}^{'})^2Q_{c} + 10Q_{c}^2(cQ_{c} -2Q_{c}^3) + 6Q_{c}^5\nonu\\
&= cQ_{c}^{''} -12Q_{c}(Q_{c}^{'})^2-6Q_{c}^2Q_{c}^{''} - c^2Q_{c} + 10 (Q_{c}^{'})^2Q_{c} + 10Q_{c}^2Q_{c}^{''} + 6Q_{c}^5\nonu\\
&=c(cQ_{c}-2Q_{c}^3)-2Q_{c}(cQ_{c}^2-Q_{c}^4)-6Q_{c}^2(cQ_{c}-2Q_{c}^3) - c^2Q_{c} + 10Q_{c}^2(cQ_{c}-2Q_{c}^3) + 6Q_{c}^5=0.
\end{align}
\noindent
The proof for the other higher order nonlinear identities \eqref{7thODE} and \eqref{9thODE} follows in the same way. Note moreover that the second order elliptic equation \eqref{eqQc} satisfied by all higher order mKdV solitons is deeply related to the variational meaning of the soliton solution.  To be more precise,
it is well-known that some of the (first) standard conservation laws of  5th, 7th and 9th-mKdV equations are the \emph{mass}
\begin{eqnarray}\label{M1cor} M[u](t)  :=  \frac 12\int_\R u^2(t,x)dx =
M[u](0), \end{eqnarray} the \emph{energy}

\be\label{E1}
E[u](t)  := \frac 12 \int_\R\left( u_x^2 - u^4\right)(t,x)dx = E[u](0),
\ee

\medskip
\medskip \noindent 
and the \emph{higher order energies}, defined respectively in $H^2(\R)$ 

\be\label{E5} E_5[u](t)  :=   \int_\R\left(\frac 12u_{xx}^2 -5u^2u_x^2 + u^6\right)(t,x)dx = E_5[u](0),\\
\ee
\medskip
\medskip\noindent 
in $H^3(\R)$
\be\label{E7} E_7[u](t)  :=   \int_\R\displaystyle\left(\frac 12 u_{3x}^2 + \frac 72 u_x^4 -  7u^2u_{xx}^2 + 35u^4u_x^2 - \frac 52 u^8\right)(t,x)dx= E_7[u](0), \ee 
\medskip
\noindent
and in $H^4(\R)$
\be\label{E9} E_9[u](t)  :=   \int_\R\displaystyle\left(\frac 12u_{4x}^2 -9u^2u_{3x}^2 + 20uu_{xx}^3 + 51u_x^2u_{xx}^2 + 63u^4u_{xx}^2 -133u^2u_x^4 - 210u^6u_x^2 + 7u^{10}\right)(t,x)dx= E_9[u](0). \ee \noindent

\ms Using the \emph{lowest} order conserved quantities (i.e., mass and energy  \eqref{M1cor}-\eqref{E1}), the variational structure of
any higher order mKdV soliton \eqref{fexplicita2} can be characterized as follows: there exists a well-defined \emph{Lyapunov
functional}, \emph{invariant in time} and such that any higher order mKdV soliton $Q_{c}$ \eqref{fexplicita2} is an \emph{extremal point}. 
Moreover, it is a global minimizer under fixed mass. For the 5th, 7th and 9th-mKdV cases, this functional is given
by (see \cite{Benj} for the mKdV case) 

\be\label{H0mk} \mathcal{H}_0[u](t) = E[u](t) + c \, M[u](t), \ee

\medskip
\medskip \noindent 
where
$c>0$ is the scaling of the solitary wave \eqref{fexplicita2}, and $M[u]$, $E[u]$ are given in (\ref{M1cor}) and (\ref{E1}). Indeed, it is easy to see that for
any small perturbation $z(t)\in H^1(\R)$, 

\be\label{Expa1mk}
\mathcal{H}_0[Q_{c}+z](t)  =  \mathcal{H}_0[Q_{c}] - \int_\R z(Q_{c}''-cQ_{c} + 2Q_{c}^3) +  O(\|z(t)\|_{H^1}^2). \ee 

\medskip
\medskip \noindent 
The zero order term  above is independent of time, and the first order term in $z$ is zero from (\ref{eqQc}), which it implies the critical character of $Q_{c}$.

\medskip

Note that by using higher order conservation laws \eqref{E5} and \eqref{E7}, and therefore \emph{higher order} Lyapunov functionals,
we are also able to characterize 5th, 7th and 9th-mKdV solitons \eqref{fexplicita2} as extremal points of these \emph{higher order}
functionals. More precisely, for instance, in the 5th-mKdV case, and using the quantities $M[u]$, $E_5[u]$ given in (\ref{M1cor}) and (\ref{E5}),
this functional is explicitly given, for any $~c>0,$ by 

\be\label{H5mk}
\mathcal{H}_5[u](t) = E_5[u](t) - c^2 \, M[u](t).
\ee

\ms \noindent 
For the 7th-mKdV case, using the quantities $M[u]$, $E_7[u]$ given in (\ref{M1cor}) and (\ref{E7}), we get

\be\label{H7mk}
\mathcal{H}_7[u](t) = E_7[u](t) + c^3 \, M[u](t),
\ee

\medskip
\medskip \noindent 
and finally for the 9th-mKdV case, using the quantities $M[u]$, $E_9[u]$ given in (\ref{M1cor}) and (\ref{E9}), we get

\be\label{H9mk}
\mathcal{H}_9[u](t) = E_9[u](t) - c^4 \, M[u](t).
\ee

\medskip
\medskip \noindent 
In fact,  it is easy to see that for any small $z(t)\in H^2(\R)~~(\text{and}~~H^3(\R),~H^4(\R)~~~\text{respectively})$, 

\be\label{Expa5mk}
\mathcal{H}_5[Q_{c}+z](t)  =  \mathcal{H}_5[Q_{c}] + \int_\R z\Big(Q_{c}^{(iv)} -c^2\, Q_{c} + f_5(Q_c)\Big) +  O(\|z(t)\|_{H^2}^2), \ee
\ms
\noindent 
\begin{align}\label{Expa7mk}
&\mathcal{H}_7[Q_{c}+z](t)  =  \mathcal{H}_7[Q_{c}] - \int_\R z\Big(Q_{c}^{(vi)} - c^3Q_{c} + f_7(Q_c)\Big) + O(\|z(t)\|_{H^2}^2),
\end{align}
\noindent
and
\begin{align}\label{Expa9mk}
&\mathcal{H}_9[Q_{c}+z](t)  =  \mathcal{H}_9[Q_{c}] + \int_\R z\Big(Q_{c}^{(viii)} - c^4Q_{c} + f_9(Q_c)\Big) + O(\|z(t)\|_{H^2}^2).
\end{align}
\noindent
In all cases, the zero order term  is independent of time, and  the first order term in $z$ is zero from (\ref{5thODE}),  \eqref{7thODE} and \eqref{9thODE}. 
Finally, and from the functionals \eqref{H5mk}-\eqref{H9mk} above, we \emph{conjecture} that the following Lyapunov functional (here we identify $E_3\equiv E$)

\be\label{HHiermk}
\mathcal{H}_{2n+1}[u](t) = E_{2n+1}[u](t) + (-1)^{n+1} c^n \, M[u](t),\quad n\in\N^+,
\ee

\medskip
\noindent
generates the associated nonlinear ODE 

\be\label{nthODE}
 Q_{c}^{(2n)} -c^n\, Q_{c} + f_{2n+1}(Q_c) =0,\quad n\in\N^+,
\ee

\medskip
\noindent
satisfied by any soliton solution of the corresponding member of the \emph{focusing} mKdV hierarchy (see \cite{AleCar}).

\subsection{Breathers in 5th, 7th and 9th order mKdV equations}
Beside these soliton solutions of  5th, 7th and 9th-mKdV equations \eqref{5mkdv}-\eqref{7mkdv}-\eqref{9mkdv}, it is possible to find another big set of explicit and oscillatory solutions,
known in the physical and mathematical literature as the \emph{breather} solution, and which is a  spatially localized, 
and periodic in time, up to translations, real function.
%
%
%

\medskip

For the 5th, 7th and 9th-mKdV equations \eqref{5mkdv}-\eqref{7mkdv}-\eqref{9mkdv}, the breather solution in the line can be obtained by using different methods
 (e.g. Inverse Scattering, Hirota method. See \cite{Mat, Mat1} for further details). Particularly we use here a matching method to find these breather
 solutions, i.e. proposing a well known ansatz, with  speeds as free parameters to be determined in order to define a solution. Note that the same
 procedure can be used to obtain periodic breather solutions of the  5th, 7th and 9th-mKdV equations.

\begin{defn}[5th, 7th and 9th-mKdV breathers]\label{579breather} Let $\al, \bt >0$
and $x_1,x_2\in \R$. The real-valued  breather solution of the 5th, 7th and 9th-mKdV equations \eqref{5mkdv}-\eqref{7mkdv}-\eqref{9mkdv} is given explicitly by the formula
\be\label{579Bre}B\equiv B_{\al, \bt}(t,x;x_1,x_2)   
 :=   2\partial_x\Bigg[\arctan\Big(\frac{\bt}{\al}\frac{\sin(\al y_1)}{\cosh(\bt y_2)}\Big)\Bigg],
\ee
with $y_1$ and $y_2$
\be\label{y1y2GE}
\begin{aligned}
&y_1 = x+ \delta_i t + x_1, \quad y_2 = x+ \ga_i t +
x_2,~~~i=5,7,9\end{aligned}\ee
\noindent and with velocities
$(\delta_5,\ga_5)$ in the 5th order case

\be\label{speeds579}
\begin{aligned}
&\delta_5 := -\al^4+10\al^2\bt^2-5\bt^4,\quad\ga_5 :=-\bt^4+10\al^2\bt^2-5\al^4,\\
\end{aligned}\ee
\noindent  $(\delta_7,\ga_7)$ in the 7th order case
\be\label{speeds579-1}
\begin{aligned}
&\delta_7 := \al^6-21\al^4\bt^2+35\al^2\bt^4-7\bt^6,\quad\ga_7 :=-\bt^6+21\al^2\bt^4-35\al^4\bt^2+7\al^6,
\end{aligned}\ee
\noindent and  $(\delta_9,\ga_9)$ in the 9th order case
\be\label{speeds579-2}
\begin{aligned}
&\delta_9 := -\al^8 + 36\al^6\bt^2 - 126\al^4\bt^4 + 84\al^3\bt^6-9\bt^8,\quad\ga_9 :=-\bt^8+36\al^2\bt^6-126\al^4\bt^4 + 84\al^6\bt^2-9\al^8.\\
\end{aligned}\ee
\end{defn}

\ms

\begin{rem}
 Observe that  breather solutions for 5th, 7th and 9th order mKdV equations have the same functional expression as the \emph{classical}
 mKdV breather solution \cite[Def.1.1]{AM}
 \be\label{Bremkdv}
  B\equiv B_{\al, \bt}(t,x;x_1,x_2)   
 :=   2\partial_x\Bigg[\arctan\Big(\frac{\bt}{\al}\frac{\sin(\al y_1)}{\cosh(\bt y_2)}\Big)\Bigg], \eeq
\noindent
with \quad $y_1 = x+ \delta t + x_1,~~ y_2 = x+ \ga t +x_2,$
\noindent
and  velocities $\delta=\al^2-3\bt^2,~~\ga=3\al^2-\bt^2$, and in fact only differing in speeds  \eqref{speeds579}-\eqref{speeds579-2}.

\end{rem}

\begin{rem}
 Finally be aware that these 5th, 7th and 9th breather solutions \eqref{579Bre} in $\R$ could be used to re-approach
 the ill-posedness of the Cauchy problem for 5th, 7th and 9th-mKdV equations \eqref{5mkdv}-\eqref{7mkdv} and \eqref{9mkdv},
 in the same way they were used by Kenig-Ponce and Vega \cite{KPV2} and Alejo \cite{Ale1}, to show a failure  of
 the flow map associated to some nonlinear dispersive equations to be uniformly continuous. This
 procedure could afford a complementary proof to the previous works on the ill-posedness of these higher order equations presented by  Kwon \cite{Kwon}
 and  Gr\"unrock \cite{Gru}. 

\end{rem}

\medskip

One of the main results of this work will be to prove that, exactly as it happens with all 5th, 7th and 9th soliton solutions \eqref{fexplicita2} which satisfy the same
nonlinear elliptic equation \eqref{eqQc},  \emph{breather} solutions \eqref{579Bre} of the 5th, 7th and 9th mKdV equations  satisfy the \emph{same} nonlinear
fourth order stationary elliptic equation. Namely

\begin{thm}\label{MT0} Any 5th, 7th or 9th mKdV breather $B$ satisfies the same fourth order stationary elliptic equation than the \emph{classical} mKdV breather, namely
\[
 B_{4x} + 10 BB_x^2 + 10B^2 B_{xx} + 6B^5 -2(\bt^2 -\al^2) (B_{xx} + 2B^3)  + (\al^2 +\bt^2)^2B =0.\]

\end{thm}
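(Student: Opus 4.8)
\medskip
\noindent\emph{Plan of proof.} The point that makes Theorem~\ref{MT0} work — and that explains why the order is irrelevant — is that the displayed equation is \emph{purely spatial}: it relates $B$ to $B_x,B_{xx},B_{xxx},B_{4x}$ through coefficients that depend only on $\al,\bt$, whereas the velocities $\delta_i,\gamma_i$ of \eqref{speeds579}--\eqref{speeds579-2} do not occur. In \eqref{y1y2GE} one has $\pd_x y_1=\pd_x y_2=1$ for every $i\in\{5,7,9\}$; hence, at any fixed time $t_0$, the map $x\mapsto B_{\al,\bt}(t_0,x;x_1,x_2)$ of \eqref{579Bre}, together with all its $x$-derivatives, is exactly the spatial profile of the \emph{classical} mKdV breather \eqref{Bremkdv} taken at $t=0$ with the phases $x_1,x_2$ replaced by $x_1+\delta_i t_0$ and $x_2+\gamma_i t_0$. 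Thus I would reduce the theorem to the single assertion that the classical mKdV breather satisfies the displayed equation, for all $\al,\bt>0$ and all phase shifts $x_1,x_2\in\R$.

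That assertion is proved in \cite{AM}, where $B$ is exhibited as a critical point of the Lyapunov functional which, with $M,E,E_5$ normalised as in \eqref{M1cor}, \eqref{E1} and \eqref{E5}, equals $E_5[u]+2(\bt^2-\al^2)E[u]+(\al^2+\bt^2)^2M[u]$; computing its variational derivative from $E_5'[u]=u_{4x}+10uu_x^2+10u^2u_{xx}+6u^5$, $E'[u]=-(u_{xx}+2u^3)$ and $M'[u]=u$ returns precisely
\[
 B_{4x} + 10 BB_x^2 + 10B^2 B_{xx} + 6B^5 -2(\bt^2 -\al^2) (B_{xx} + 2B^3) + (\al^2 +\bt^2)^2B =0 .
\]
Since this is an algebraic identity in the explicit breather profile it holds at every time and for every $x_1,x_2\in\R$, so the reduction above closes the proof for $i=5,7,9$. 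This is the breather counterpart of the fact recalled in \eqref{eqQc2}, namely that every higher order soliton \eqref{fexplicita2} satisfies the same second order ODE \eqref{eqQc} irrespective of the order.

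A self-contained alternative, avoiding \cite{AM}, is to substitute \eqref{579Bre} directly into the displayed equation; as before, the velocities \eqref{speeds579}--\eqref{speeds579-2} never enter. I would organise this by putting $p:=\al y_1$, $q:=\bt y_2$, writing $B=2(\al\pd_p+\bt\pd_q)\arctan\!\big(\tfrac{\bt}{\al}\tfrac{\sin p}{\cosh q}\big)$ so that every $x$-derivative acts as $\al\pd_p+\bt\pd_q$, expanding, and clearing the common denominator (a power of $\al^2\cosh^2 q+\bt^2\sin^2 p$); what is left is a polynomial identity in $\sin p,\cos p,\sinh q,\cosh q$ with coefficients polynomial in $\al,\bt$, to be checked to vanish identically. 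The only real obstacle lies in this second route: the left-hand side is of degree five in $B$ and carries four derivatives, so the expansion is long and error-prone and is best performed — and its vanishing confirmed — with a computer algebra system, keeping track of the degrees in $\al,\bt$ and the parity in $p$. Since the reduction to the classical breather leaves essentially nothing to do beyond invoking \cite{AM}, I would take that as the main argument and relegate the brute-force verification to a remark or an appendix.
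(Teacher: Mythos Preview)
Your proposal is correct, and the main argument---reduce to the classical mKdV breather by noting that the equation is purely spatial while the higher-order breathers differ from the classical one only in the time-dependent phase shifts---is cleaner than what the paper does. The paper instead splits cases: for the 5th-order breather it uses the time-dependent identity $\tilde B_t + B_{4x} + f_5(B)=0$ together with a separately verified relation $\tilde B_t = (\al^2+\bt^2)^2 B - 2(\bt^2-\al^2)(B_{xx}+2B^3)$ (Lemma~\ref{Iddificil}), and for the 7th and 9th orders it abandons that route and performs a direct symbolic computation in the $H,N$ variables, confirmed by \emph{Mathematica}---which is essentially your ``self-contained alternative''. Your reduction makes transparent why the order is irrelevant and handles all three cases at once by invoking \cite{AM}; the paper's 5th-order argument has the minor advantage of being internal to the present work, but at the cost of an extra lemma, while its 7th/9th argument buys nothing over your observation. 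In short, you have identified the conceptual reason behind the theorem that the paper's proof obscures.
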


This fact can be interpreted as if all mKdV breathers and higher order mKdV breathers are characterized
by the same elliptic equation, in a similar way as it was showed for the KdV equation by Lax \cite{LAX1}. 
Moreover, and as second main result in this paper, we give a  positive answer to the question of breathers stability for these higher order mKdV equations. 

\begin{thm}\label{MT1} 5th, 7th and 9th mKdV breathers are orbitally stable in the $H^2$-topology. 
\end{thm}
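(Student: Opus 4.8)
The plan is to mimic exactly the strategy used by Alejo--Martel \cite{AM} for the classical mKdV breather, exploiting the key structural fact, now available to us as Theorem \ref{MT0}, that the higher order mKdV breather $B$ satisfies the \emph{same} fourth order elliptic equation as the classical mKdV breather. The natural Lyapunov functional is therefore unchanged: I would set, for a breather with parameters $\al,\bt$,
\[
\mathcal{F}[u](t) := E_5[u](t) + 2(\bt^2-\al^2)\,E[u](t) + \big((\al^2+\bt^2)^2 + \text{lower order corrections}\big)\,M[u](t)
\]
in the $5$th order case, and the analogous combination of $E_7$ (resp. $E_9$), $E$ and $M$ in the $7$th and $9$th order cases, the coefficients being chosen precisely so that the Euler--Lagrange equation of $\mathcal F$ is the fourth order elliptic ODE of Theorem \ref{MT0}. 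The first key point is that, because all the higher order mKdV flows share the conservation laws $M$, $E$, $E_5$, $E_7$, $E_9$ (they are members of the same integrable hierarchy), this $\mathcal F$ is conserved along the $5$th, $7$th, $9$th-mKdV flow, and by Theorem \ref{MT0} the breather $B$ is a critical point: $\mathcal F'[B]=0$. Hence for a perturbation $u = B + z$ one gets $\mathcal F[B+z] = \mathcal F[B] + \tfrac12 \langle \mathcal F''[B]z, z\rangle + o(\|z\|_{H^2}^2)$, and everything reduces to controlling the quadratic form $\mathcal F''[B]$.

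Second, I would invoke the spectral/coercivity analysis already carried out in \cite{AM} for precisely this second variation operator. The operator $\mathcal F''[B]$ is a fourth order Schr\"odinger-type operator whose potential depends only on $B$ and on $\al,\bt$; since $B$ here is literally the same function (only the time dependence through $\delta_i,\ga_i$ differs), the operator is the \emph{same} as in the classical case up to the time-dependent translation parameters. Therefore the orthogonality conditions and the coercivity estimate
\[
\langle \mathcal F''[B]z, z\rangle \;\geq\; \mu \|z\|_{H^2}^2
\]
for $z$ in a finite-codimension subspace (orthogonal to the symmetries: the two translation directions $\partial_x B$, $\partial_t B$, and the scaling/parameter directions $\partial_\al B$, $\partial_\bt B$) hold verbatim. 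The modulation argument is also standard: decompose $u(t) = B_{\al(t),\bt(t)}(\cdot - x_1(t), \cdot - x_2(t)) + z(t)$, choose the four modulation parameters so that $z(t)$ satisfies the four orthogonality conditions, derive from the implicit function theorem that the parameters stay close to their initial values for as long as $z$ stays small, and close the bootstrap using conservation of $\mathcal F$ together with the coercivity. One subtlety worth checking explicitly is that the modulation equations — the ODEs for $\dot\al, \dot\bt, \dot x_1, \dot x_2$ — remain well-posed and produce terms that are quadratically small in $z$; this is where the new speeds $\delta_i, \ga_i$ enter, but they only change the transport part and not the structure of the argument.

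The main obstacle, and the only place where real new work is needed, is verifying that the conservation laws $E_5$, $E_7$, $E_9$ are genuinely conserved along the \emph{higher order} mKdV flows (not just along classical mKdV) and that the corresponding Lyapunov functional is well-defined and $C^2$ on the relevant Sobolev space. For the $5$th order case this sits in $H^2(\R)$, matching the global well-posedness theory; but for the $7$th and $9$th order equations the natural energies $E_7$, $E_9$ live in $H^3(\R)$ and $H^4(\R)$, whereas the stability statement is in the $H^2$-topology. I would resolve this by the same device as in \cite{AM}: use only $E_5$ (which controls $H^2$) as the top-order term in $\mathcal F$ even for the $7$th and $9$th order equations, relying on the fact — which must be checked — that $E_5$ is \emph{also} conserved by the $7$th and $9$th-mKdV flows because all these equations belong to one commuting hierarchy. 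If that holds, the functional $\mathcal F$ built from $E_5$, $E$, $M$ works uniformly for all three equations, its critical point is exactly the breather by Theorem \ref{MT0}, and the coercivity and modulation machinery of \cite{AM} applies with only cosmetic changes. The remaining routine computations are: (i) the explicit coefficient matching in the definition of $\mathcal F$; (ii) checking $\mathcal F'[B]=0$ directly from Theorem \ref{MT0}; and (iii) the bookkeeping in the modulation estimates with the new velocities.
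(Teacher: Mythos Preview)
Your final resolution is correct and is exactly the paper's approach: one uses the single Lyapunov functional
\[
\mathcal H[u] = E_5[u] + 2(\bt^2-\al^2)\,E[u] + (\al^2+\bt^2)^2\,M[u]
\]
uniformly for the 5th, 7th and 9th order flows, relying on the fact that $M$, $E$, $E_5$ are conserved by every member of the hierarchy; its Euler--Lagrange equation is precisely the fourth order ODE of Theorem \ref{MT0}, and its second variation is the operator $\mathcal L$ of \cite{AM}, so the spectral analysis and coercivity transfer verbatim.

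Two points to tighten. First, your initial suggestion of taking $E_7$ (resp.\ $E_9$) as the top term for the 7th (resp.\ 9th) order case cannot be made to work: the Euler--Lagrange equation of any functional with leading term $\int u_{3x}^2$ or $\int u_{4x}^2$ is necessarily sixth or eighth order, so no choice of coefficients will reproduce the fourth order ODE of Theorem \ref{MT0}; also, there are no ``lower order corrections'' to the mass coefficient --- it is exactly $(\al^2+\bt^2)^2$. Second, the paper (as in \cite{AM}) modulates only the two translation parameters $x_1(t),x_2(t)$, not $\al,\bt$; the kernel of $\mathcal L$ is spanned by $B_1=\partial_{x_1}B$ and $B_2=\partial_{x_2}B$, and the single negative direction is handled by an additional orthogonality condition (against $B$, via $B_0$ and conservation of mass) rather than by modulating the scaling. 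With these adjustments your outline coincides with the paper's proof.
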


A more detailed version of this result is given in Theorem \ref{T1gardner}. 
As we have already shown, we need the space $H^2$  by a regularity argument and through the variational characterization that we obtain
of these breather solutions of higher order mKdV equations. 


\bigskip

\subsection{Organization of this paper}
In Sect.\ref{2}  we present some higher order  nonlinear identities adapted to 5th, 7th and 9th-mKdV breathers. Furthermore, we prove that any 5th, 7th or 9th-mKdV breather solutions satisfy a fourth order nonlinear ODE, which characterizes them. Sect.\ref{3} is devoted to collect and list the properties of a linearized operator associated to these higher order breather solutions. In Sect.\ref{4} we introduce a suitable $H^2$-Lyapunov functional for  higher order mKdV equations (\ref{5mkdv}), (\ref{7mkdv}) and (\ref{9mkdv}).
Finally, in Sect.\ref{5} we present a detailed version of Theorem \ref{T1gardner}.

\bigskip
 \noindent

{\bf Acknowledgments.}
We would like to thank to professors C. Mu\~noz and C. Kwak for richful discussions and comments on a previous version.

\section{ Higher order nonlinear identities}\label{2}

\medskip
 The aim of this section is to show explicit nonlinear identities satisfied by any 5th, 7th or 9th-mKdV breathers.

First of all, consider the two directions associated to spatial translations. Let $B_{\al,\beta}$ as introduced in \eqref{579Bre}. Then we define

\be\label{B12}
 B_1(t ; x_1,x_2) := \partial_{x_1} B_{\al,\bt}(t ; x_1, x_2)\quad \hbox{ and } \quad  B_2(t ; x_1,x_2): =\partial_{x_2} B_{\al,\bt}(t ;  x_1, x_2).
\ee

\bigskip
\noindent
It is clear that, for all $t\in \R,$ and  $\al,\bt$ as in definition \eqref{579Bre} and $x_1,x_2\in \R$, both $B_1$ and $B_2$
are real-valued, exponentially decreasing in space, functions in the  Schwartz class. Moreover, it is not difficult to see that they are \emph{linearly independent} as functions of
the $x$-variable, for all time $t$ fixed. We also define the \emph{partial} mass associated to any 5th, 7th or 9th-mKdV breather $B$ \eqref{579Bre} as ($G=\frac{\bt}{\al}\sin(\al y_1),~F=\cosh(\bt y_2)$):

\begin{align}\label{partialMass}
 \mathcal M(t,x)\equiv\mathcal M_{\al,\bt}(t,x)  &:= \frac 12\int_{-\infty}^x B^2(t,s; x_1,x_2)ds ~= \bt +\frac 12 \partial_x \log(G^2 + F^2)(t,x).
\end{align}
\ms
\noindent
Here we have used that $B^2 = \frac 12 \partial_x^2 \log(G^2 + F^2)(t,x)$. See \cite{AM1} for further details. Finally, let consider $\tilde B=\tilde B_{\al,\bt}$  as the following $L^{\infty}$-function associated to mKdV breathers:
\bigskip
\noindent
\be\label{5tBmkdv}
\tilde B(t,x) := 2\arctan \Big(\frac{\bt}{\al}\frac{\sin(\al y_1)}{\cosh(\bt y_2)}\Big).
\ee

\bigskip
\noindent
The following nonlinear identities are satisfied by 5th, 7th and 9th-mKdV breathers:

\medskip\bigskip

\begin{lem}\label{5Id1mkdv} We have for all $t\in \R,$ and  $\al,\bt>0,$ the following identities. Let $B=B_{\al,\bt}$ be any  5th, 7th or  9th-mKdV breather solution of the form \eqref{579Bre} as it corresponds. Then\\
\begin{enumerate}
\item For any fixed $t\in \R$, we have $ (\tilde B)_t$ well-defined in the Schwartz class, satisfiying respectively for the  5th, 7th or  9th-mKdV equations  that

 \be\label{52ndmkdvnew}
\tilde B_{t} + B_{2nx} + f_{2n+1}(B)=0,\quad \text{with}\quad n=2,~3,~4.
\ee
\bigskip
%
%
%
 \item Let $\mathcal M$ be defined by (\ref{partialMass}). Then
 
 \medskip
 
 \begin{enumerate}
 \item\emph{The 5th order case:}
 \be\label{5Firstmkdv}\begin{aligned}
&B_{xx}^2 - 2B \tilde B_{t} + 2(\mathcal M)_t - 2B^6 - 2B_xB_{xxx}  - 10B^2B_{x}^2 =0.
\end{aligned}
\ee
 \item \emph{The 7th order case:}
\be\label{Firstmkdv}\begin{aligned}
B_{3x}^2 &+ 2B \tilde B_{t} - 2(\mathcal M)_t + 5B^8 + 2B_xB_{5x} -2B_{xx}^2B_{4x} \\
&+ 28B^2B_xB_{3x}  -14B^2B_{xx}^2 + 56BB_x^2B_{xx} + 7B_{x}^4  +  70B^4B_{x}^2  =0.
\end{aligned}
\ee
 
 \item \emph{The 9th order case:}
\be\label{Firstmkdv9}\begin{aligned}
B_{4x}^2  &- 2B \tilde B_{t} + 2(\mathcal M)_t - 2B_{7x}B_x + 2B_{6x}B_{xx} -2B_{5x}B_{3x} + F[B] =0,\\
&~F[B]:=-2\int_{-\infty}^xf_9(B)(s)B_sds.
\end{aligned}
\ee
 \end{enumerate}
 \end{enumerate}
\end{lem}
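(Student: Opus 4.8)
\textbf{Proof plan for Lemma \ref{5Id1mkdv}.}

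The plan is to verify all the claimed identities by direct computation, exploiting the representation of the breather through the potential $\tilde B$ in \eqref{5tBmkdv} and the partial mass $\mathcal M$ in \eqref{partialMass}. First I would establish part (1). The key observation is that $B = \partial_x \tilde B$, so differentiating the mKdV-hierarchy equation \eqref{5mkdv}--\eqref{9mkdv} (written in the form $u_t + (u_{2nx} + f_{2n+1}(u))_x = 0$) once in the spatial variable and using that $B$ is a breather solution, one gets $\partial_x(\tilde B_t + B_{2nx} + f_{2n+1}(B)) = 0$. Since $\tilde B_t$ and all the other terms are Schwartz-class in $x$ (as follows from the explicit $\arctan$ formula and the exponential decay noted just before the lemma), integrating in $x$ forces the bracket to vanish identically, which is \eqref{52ndmkdvnew}. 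One must be a little careful that $\tilde B$ itself is only $L^\infty$, but $\tilde B_t$ decays because the $t$-derivative hits the argument of $\arctan$, producing the needed decay; this is the one subtlety in part (1).

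For part (2), the idea is to multiply the identity \eqref{52ndmkdvnew} by $B$ and integrate, turning it into a conservation-law-type statement. Concretely, using $B = \partial_x \tilde B$ and the definition $\partial_x \mathcal M = \tfrac12 B^2$, we have $\partial_t \mathcal M = \int_{-\infty}^x B B_t\,ds = \int_{-\infty}^x B\,\partial_s(\tilde B_t)\,ds$ after an integration by parts (the boundary terms vanish by decay), or more directly one differentiates \eqref{partialMass} in $t$ and matches against $-B\,(B_{2nx} + f_{2n+1}(B))$ using \eqref{52ndmkdvnew}. The product $B\,B_{2nx}$ is then rewritten via repeated integration by parts as a total $x$-derivative plus lower-order terms: e.g. in the 5th order case $B B_{4x} = \partial_x(B B_{3x} - B_x B_{xx}) + B_{xx}^2$, so $\int_{-\infty}^x B B_{4s}\,ds$ produces the $B_{xx}^2$ and $B_x B_{3x}$ terms appearing in \eqref{5Firstmkdv}; similarly the polynomial part $B f_{2n+1}(B)$ is integrated, with each monomial either being a perfect derivative or contributing one of the listed terms. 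Collecting everything and using $\partial_x \mathcal M = \tfrac12 B^2$ to absorb the $B^{2n+2}$-type contributions yields \eqref{5Firstmkdv}, \eqref{Firstmkdv}; in the 9th order case the nonlinearity $f_9$ is too complicated for all its monomials to integrate cleanly, which is precisely why the term $F[B] = -2\int_{-\infty}^x f_9(B)(s) B_s\,ds$ is left in integral form in \eqref{Firstmkdv9}.

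The main obstacle is purely the bookkeeping: $f_7$ and especially $f_9$ have many terms, and the integration-by-parts reductions must be carried out term by term, keeping track of signs and of which combinations collapse to total derivatives (contributing nothing after evaluation against the decaying tails) versus which survive. I would organize this by treating the linear dispersive part $B_{2nx}$ and the nonlinear part $f_{2n+1}(B)$ separately, handling the former by the explicit Leibniz-type identities for $\int B B_{2nx}$ and the latter monomial-by-monomial, and only at the end substituting $\partial_x\mathcal M = \tfrac12 B^2$ and $B = \partial_x\tilde B$ to get the stated form. No genuinely new idea is needed beyond part (1); the content is that the breather's time-evolution, integrated once against $B$, produces exactly the quadratic-in-highest-derivative identities that will later feed the Lyapunov functional analysis.
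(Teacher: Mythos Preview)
Your overall strategy matches the paper's: for part (1), integrate the PDE once in $x$ using $B=\partial_x\tilde B$ and the decay of all terms; for part (2), combine \eqref{52ndmkdvnew} with a derivative of $B$ and integrate from $-\infty$ to $x$. There is, however, a slip in your execution of part (2): the correct multiplier is $B_x$, not $B$. Multiplying \eqref{52ndmkdvnew} by $B_x$ and integrating gives, in the 5th order case,
\[
\int_{-\infty}^x B_s B_{4s}\,ds = B_x B_{3x} - \tfrac12 B_{xx}^2,\qquad
\int_{-\infty}^x B_s f_5(B)\,ds = 5B^2 B_x^2 + B^6,
\]
and $\int_{-\infty}^x \tilde B_t\, B_s\,ds = B\tilde B_t - (\mathcal M)_t$ after one integration by parts, which assembles exactly into \eqref{5Firstmkdv}. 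By contrast, if you multiply by $B$ as you state, then $\int_{-\infty}^x B B_{4s}\,ds = B B_{3x} - B_x B_{xx} + \int_{-\infty}^x B_{ss}^2\,ds$, which leaves a non-local $\int B_{xx}^2$ term and the boundary combination $B B_{3x}$ rather than the $B_x B_{3x}$ and $B_{xx}^2$ you claim appear. Your alternative formulation---differentiate $\mathcal M$ in $t$ directly and feed in the equation---does work and is in fact equivalent to multiplying by $B_x$ after one integration by parts, so the underlying idea is sound; you just need to be consistent about which multiplier you are using when you carry out the bookkeeping.
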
 
\begin{proof}
In the 5th case, the first item (\ref{52ndmkdvnew}) is a consequence of (\ref{5tBmkdv}) and a convenient integration in space
(from $-\infty$ to $x$). To obtain (\ref{5Firstmkdv}) we multiply (\ref{52ndmkdvnew}), when $n=2,$ by $B_{x}$ and integrate in space in the same region. The proofs in the 7th and 9th order cases follow similar steps as in the 5th order case.
\end{proof}

We compute now the \emph{higher order} energies \eqref{E5}, \eqref{E7} and \eqref{E9} of any higher order breather solution of \eqref{5mkdv},  \eqref{7mkdv} and \eqref{9mkdv} equations.

\begin{lem}\label{ME7} Let $B=B_{\al,\bt}$ be any 5th, 7th or 9th order mKdV breather solutions respectively, for $\al, \bt$ as in definition \eqref{579Bre}.
Then the higher order energies \eqref{E5}, \eqref{E7} and \eqref{E9}  of a 5th, 7th and 9th-mKdV breather $B$ are respectively

\be\label{EnergyB5}
E_5[B] :=  -\frac{2}{5}\bt\ga_5,\quad E_7[B] :=  \frac{2}{7}\bt\ga_7,\quad \text{and}\quad E_9[B] :=  -\frac{2}{9}\bt\ga_9,\\
\ee
\ms
\noindent

with $\ga_5,~\ga_7,~\ga_9$ given in \eqref{speeds579}-\eqref{speeds579-1}-\eqref{speeds579-2}.
\end{lem}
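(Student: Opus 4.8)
The plan is to compute each higher order energy directly by substituting the explicit breather profile \eqref{579Bre} into the integrals \eqref{E5}, \eqref{E7}, \eqref{E9}, but organizing the computation so that the nonlinear identities of Lemma \ref{5Id1mkdv} do most of the work. First I would use the identities in item (2) of Lemma \ref{5Id1mkdv}: for instance, in the 5th order case \eqref{5Firstmkdv} expresses the combination $B_{xx}^2 - 2B^6 - 2B_xB_{3x} - 10B^2B_x^2$ as $2B\tilde B_t - 2(\mathcal M)_t$. Note that, after an integration by parts, $\int_\R B_x B_{3x} = -\int_\R B_{xx}^2$, so the integrand of $E_5[B] = \int_\R (\tfrac12 B_{xx}^2 - 5B^2B_x^2 + B^6)$ can be rewritten, up to a constant multiple and exact derivatives, in terms of the left-hand side of \eqref{5Firstmkdv}. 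Concretely, one checks that $\tfrac12 B_{xx}^2 - 5B^2B_x^2 + B^6$ equals $-\tfrac12\big(B_{xx}^2 - 2B^6 - 2B_xB_{3x} - 10B^2B_x^2\big)$ plus the total derivative $\partial_x(\text{something})$, so integrating over $\R$ and using \eqref{5Firstmkdv} gives $E_5[B] = -\tfrac12\int_\R \big(2B\tilde B_t - 2(\mathcal M)_t\big) = -\int_\R B\tilde B_t\, dx + \int_\R (\mathcal M)_t\, dx$.

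Next I would evaluate the two remaining integrals. The term $\int_\R (\mathcal M)_t\,dx$: since $\mathcal M(t,x)\to 0$ as $x\to-\infty$ and $\mathcal M(t,x)\to M[B]$ (the total mass) as $x\to+\infty$, and the total mass of the breather is the time-independent constant $2\bt$ (this follows from \eqref{partialMass} since the $\log$ term contributes a boundary difference; this is standard, see \cite{AM1}), one has $\partial_t\int_\R(\mathcal M)_t$... more carefully, $\int_\R(\mathcal M)_t\,dx$ should be computed from the explicit formula \eqref{partialMass}. The term $\int_\R B\tilde B_t\,dx$ is where the velocity parameter $\ga_5$ enters: using $B = \partial_x\tilde B$ and the explicit dependence of $\tilde B$ on $y_1 = x+\delta_5 t + x_1$, $y_2 = x+\ga_5 t + x_2$ through \eqref{5tBmkdv}, one differentiates in $t$ (bringing down factors $\delta_5$ and $\ga_5$) and integrates against $B$. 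By parity/oscillation structure of $\sin(\al y_1)$ versus $\cosh(\bt y_2)$, the $\delta_5$-contribution should vanish upon integration and only the $\ga_5$-contribution survives, producing the factor $\tfrac{2}{5}\bt\ga_5$. The analogous identities \eqref{Firstmkdv} and \eqref{Firstmkdv9} handle the 7th and 9th order cases in exactly the same fashion, with the relevant integrations by parts ($\int B_xB_{5x} = \int B_{3x}^2$, etc.) reducing each energy integrand to a constant multiple of the left-hand side of the corresponding identity plus exact derivatives.

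The main obstacle I anticipate is the bookkeeping in the 9th order case, where the identity \eqref{Firstmkdv9} contains the nonlocal term $F[B] = -2\int_{-\infty}^x f_9(B)(s)B_s\,ds$ rather than an explicit polynomial in $B$ and its derivatives. To close the argument one must show that $\int_\R F[B](t,x)\,dx$ — together with the other boundary/total-derivative contributions — combines to give exactly $-\tfrac{2}{9}\bt\ga_9$. This will require either integrating $F[B]$ by parts once (writing $\int_\R F[B]\,dx$ as a moment of $f_9(B)B_x$) and then comparing with the polynomial terms appearing in $E_9[B]$, or verifying directly from the explicit profile that the non-polynomial pieces cancel. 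A second, more routine source of friction is verifying that each energy integrand differs from the relevant identity's left-hand side only by an exact derivative: this is a finite algebraic check (distributing derivatives, collecting monomials in $B, B_x, B_{xx},\dots$), but in the 9th order case the number of monomials is large, so I would set it up symbolically. Once these two points are dispatched, the final evaluation of $\int_\R B\tilde B_t\,dx$ via the explicit formulas \eqref{5tBmkdv}, \eqref{y1y2GE} and the velocities \eqref{speeds579}--\eqref{speeds579-2} is a direct (if somewhat lengthy) computation yielding the stated values.
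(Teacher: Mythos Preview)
Your central algebraic claim is false, and this breaks the approach. You assert that
\[
\tfrac12 B_{xx}^2 - 5B^2B_x^2 + B^6 \;=\; -\tfrac12\big(B_{xx}^2 - 2B^6 - 2B_xB_{3x} - 10B^2B_x^2\big) + \partial_x(\cdots),
\]
but subtracting the right-hand side from the left gives $B_{xx}^2 - B_xB_{3x} - 10B^2B_x^2$. Since $B_{xx}^2 - B_xB_{3x} = 2B_{xx}^2 - \partial_x(B_xB_{xx})$ and $B^2B_x^2$ is not a total derivative (nor can it combine with the remaining $2B_{xx}^2$ to form one), the difference is not exact. More structurally: integrating \eqref{5Firstmkdv} alone gives \emph{one} linear relation among the three independent quantities $\int B_{xx}^2$, $\int B^2B_x^2$, $\int B^6$, which cannot pin down the different linear combination $E_5[B]$ in terms of $\int B\tilde B_t$ and $\int(\mathcal M)_t$. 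Your formula $E_5[B]=-\int B\tilde B_t + \int(\mathcal M)_t$ is therefore not established.

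The missing ingredient is a second relation: multiply \eqref{52ndmkdvnew} (case $n=2$) by $B$ and integrate, obtaining $\int B_{xx}^2 = \int(20B^2B_x^2 - 6B^6 - B\tilde B_t)$. Combining this with the integrated version of \eqref{5Firstmkdv} lets you solve simultaneously for $\int B^6$ and $\int B_{xx}^2$; when you then substitute into $E_5[B]$, both $\int B^2B_x^2$ and $\int B\tilde B_t$ cancel completely, leaving the clean reduction $E_5[B]=-\tfrac15\int_\R(\mathcal M)_t\,dx$. The analogous reductions $E_7[B]=\tfrac17\int(\mathcal M)_t$ and $E_9[B]=-\tfrac19\int(\mathcal M)_t$ follow the same pattern. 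This makes your proposed direct evaluation of $\int_\R B\tilde B_t$ (via a ``parity/oscillation'' argument that is in any case only sketched) unnecessary: the final step is simply $\int_\R(\mathcal M)_t = \tfrac12\,\partial_t\log(G^2+F^2)\big|_{-\infty}^{+\infty} = 2\bt\gamma_i$, read off from \eqref{partialMass}.
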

\ms
\noindent
\begin{rem}\label{obs1}
Note that as it happens with the classical mKdV breather solution $B$,  where $E[B] :=  \frac{2}{3}\bt\ga$ (see \cite[Lemma 2.4]{AM}),
the sign of the higher order energies $E_5,~ E_7,~E_9$ is driven by a nonlinear balance among the different terms depending on scalings $\al,\bt$. 
\end{rem}

\begin{rem}
From the above Lemma, we \emph{conjecture} that for any $(2n+1)$-order mKdV breather $B$, its $(2n+1)$-order energy is given by
\be\label{conjecE}
E_{2n+1}[B] (t)= (-1)^{n+1}\frac{2\bt}{2n+1}\ga_{2n+1},~~n\in\N,
\ee
\noindent
and with
\[
 \gamma_{2n+1} := \sum_{j=0}^{n}(-1)^j\frac{(2n+1)!}{(2j)!(2n+1-2j)!}\al^{2j}\bt^{2(n-j)},\quad\hspace{1cm}~~~~n\in\N.
\]

\medskip

\end{rem}

\begin{proof}{(of Lemma \ref{ME7})}\newline

We start with the 5th order case. First of all,  let us prove the following reduction
\be\label{red2}
E_5[B] (t)= -\frac 15\int_\R \Big((\mathcal M)_t(t,x)\Big)dx.
\ee
Indeed,  we multiply (\ref{52ndmkdvnew}) by $B$ and integrate in space: we get
\[
\int_\R B_{xx}^2  = \int_\R 20B^2B_x^2 - 6B^6 -B \tilde B_{t}.
\]
On the other hand, integrating (\ref{5Firstmkdv}),
\[
\int_\R B_{xx}^2 =   \frac 23 \int_\R B^6 + \frac 23\int_\R B\tilde B_{t} - \frac 23 \int_\R (\mathcal M)_t + \frac{10}{3}\int_\R B^2B_x^2.
\]
From these two identities, we get
\[
 \int_\R B^6 = \frac{1}{10} \int_\R (\mathcal M)_t -\frac 14\int_\R B \tilde B_{t} + \frac{5}{2} \int_\R B^2B_x^2,
\]
and therefore
\[
\int_\R B_{xx}^2  = -\frac 35 \int_\R (\mathcal M)_t +
\frac{15}{3}\int_\R B^2B_x^2 +\frac 12 \int_\R B \tilde B_{t}.
\]
Finally, substituting the last two identities into (\ref{E5}), we get (\ref{red2}), as desired. Proceeding in the same way,  in the
7th and 9th order cases we obtain the corresponding simplications

\be\label{red279}
E_7[B] (t)= \frac 17\int_\R \Big((\mathcal M)_t(t,x)\Big)dx,\quad E_9[B] (t)= \frac 19\int_\R \Big((\mathcal M)_t(t,x)\Big)dx.
\ee

\medskip

Now we prove (\ref{EnergyB5}).  From (\ref{partialMass}), we have that
%
\[
 \mathcal M_t(t,x)=  \frac12\partial_x\partial_t\log(G^2+F^2)(t,x).
\]
\noindent
Now substituting in the energy \eqref{red2}, remembering the identity \eqref{52ndmkdvnew} and the explicit expression for
 $\mathcal M[B]$ in \eqref{partialMass}, we get

\begin{align*}
   E_5[B] (t)&= -\frac 15\int_\R \Big((\mathcal M)_t(t,x)\Big)\ dx =
-\frac 15\frac12\int_\R \Big(\partial_x\partial_t\log(G^2+F^2) \Big)dx\\
& = -\Big(\frac 15\frac12\partial_t\log(G^2+F^2)
\Big)|_{-\infty}^{+\infty} = -\frac{2}{5}\bt\ga_5.
\end{align*}
\noindent
For the 7th and 9th order cases, we proceed as above, but now using \eqref{52ndmkdvnew}, \eqref{Firstmkdv} and \eqref{Firstmkdv9}, and we get
\[
 E_7[B]=\frac{2}{7}\bt\ga_7,\quad\text{and}\quad E_9[B]=-\frac{2}{9}\bt\ga_9.
\]

\end{proof}

Note that since the profiles of 5th, 7th and 9th order mKdV breathers (solitons) agree with the expression of the classical
mKdV breather (soliton), and since the
energy $E$ \eqref{E1} is a conserved quantity for the mKdV and  5th, 7th and 9th higher order equations, when the lowest energy
$E$ \eqref{E1} is evaluated in these 5th, 7th and 9th higher order breathers we obtain in both cases the same value than the mKdV breather energy, $\frac{2}{3}\bt\ga$.
For the sake of simplicity and to understand that property, we remember here the relation \cite[(4.2),(4.4)]{AM1} in the case of low order
conserved quantities evaluated at breather solutions $B$ and at soliton solutions $Q_c$:

\be\label{r1mass}M[B] = 2 Re \Big[M[Q_{c}]|_{\sqrt{c}=\bt+i\al}\Big]\qquad\text{and}\qquad E[B] = 2 Re\Big[ E[Q_{c}]|_{\sqrt{c}=\bt+i\al}\Big].\ee
 \noindent
\medskip
%
%


\bigskip
The next nontrivial identity for   5th-mKdV breathers \eqref{579Bre} will be useful in the proof of the
nonlinear stationary equation that they satisfy.

\begin{lem}\label{Iddificil}
Let $B=B_{\al,\bt}$ be any  5th-mKdV breather \eqref{579Bre}.
Then, for all $t\in \R$,

\be\label{ide1nvbc2}
 \tilde{B}_{t}    = (\al^2 +\bt^2)^2B - 2(\bt^2 -\al^2)(B_{xx} + 2B^3).
\ee
\end{lem}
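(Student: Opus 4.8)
The plan is to exploit a simple structural fact: the \emph{spatial profile} of a $5$th-mKdV breather is literally the same two-parameter function as that of the classical mKdV breather, and only the two phase speeds differ. Introduce the smooth function on $\R^2$
\[
\Psi(y_1,y_2):=2\arctan\!\Big(\tfrac{\bt}{\al}\tfrac{\sin(\al y_1)}{\cosh(\bt y_2)}\Big),
\]
so that, for the $5$th-mKdV breather, $\tilde B(t,x)=\Psi(y_1,y_2)$ with $y_1,y_2$ as in \eqref{y1y2GE}--\eqref{speeds579}. Since $\partial_xy_1=\partial_xy_2=1$ while $\partial_ty_1=\delta_5$, $\partial_ty_2=\ga_5$, the chain rule gives $B=\tilde B_x=(\partial_{y_1}+\partial_{y_2})\Psi$, $B_{xx}=(\partial_{y_1}+\partial_{y_2})^3\Psi$, hence
\[
B_{xx}+2B^3=(\partial_{y_1}+\partial_{y_2})^3\Psi+2\big((\partial_{y_1}+\partial_{y_2})\Psi\big)^3,
\]
an expression which does not involve the speeds $\delta_5,\ga_5$, whereas $\tilde B_t=\delta_5\,\partial_{y_1}\Psi+\ga_5\,\partial_{y_2}\Psi$.

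The second ingredient is the corresponding first-order identity for the \emph{classical} breather. The classical mKdV breather \eqref{Bremkdv}, with speeds $\delta^{cl}=\al^2-3\bt^2$, $\ga^{cl}=3\al^2-\bt^2$, solves \eqref{mkdv} (\cite{AM}); integrating \eqref{mkdv} once in $x$ from $-\infty$, where $\tilde B$ and its space and time derivatives tend to $0$, yields $\tilde B_t+B_{xx}+2B^3=0$ for that breather. Written out for the profile $\Psi$ this reads
\[
\delta^{cl}\partial_{y_1}\Psi+\ga^{cl}\partial_{y_2}\Psi+(\partial_{y_1}+\partial_{y_2})^3\Psi+2\big((\partial_{y_1}+\partial_{y_2})\Psi\big)^3=0,
\]
and since $x_1,x_2$ are free parameters this is a \emph{pointwise} identity on all of $\R^2$. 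Reading it in the variables of the $5$th-mKdV breather and using the first display, it says precisely that $B_{xx}+2B^3=-(\al^2-3\bt^2)\,\partial_{y_1}\Psi-(3\al^2-\bt^2)\,\partial_{y_2}\Psi$.

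Now the conclusion is immediate: substitute $B=\partial_{y_1}\Psi+\partial_{y_2}\Psi$ and this last expression for $B_{xx}+2B^3$ into the right-hand side of \eqref{ide1nvbc2}. One obtains
\[
\big[(\al^2+\bt^2)^2+2(\bt^2-\al^2)(\al^2-3\bt^2)\big]\partial_{y_1}\Psi+\big[(\al^2+\bt^2)^2+2(\bt^2-\al^2)(3\al^2-\bt^2)\big]\partial_{y_2}\Psi,
\]
and the two bracketed coefficients collapse to $-\al^4+10\al^2\bt^2-5\bt^4=\delta_5$ and $-5\al^4+10\al^2\bt^2-\bt^4=\ga_5$ respectively (cf.\ \eqref{speeds579}). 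Hence the right-hand side of \eqref{ide1nvbc2} equals $\delta_5\,\partial_{y_1}\Psi+\ga_5\,\partial_{y_2}\Psi=\tilde B_t$, which is the claim.

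The only step that requires care is the passage from ``$\tilde B_t+B_{xx}+2B^3=0$ holds for the classical breather for all $t,x$'' to ``the associated identity among $\Psi$ and its partials holds at every point of $\R^2$'': this rests on the elementary observation that the two phases of a breather are independent free parameters (and on the vanishing of the integration constant, by decay). Everything after that is the one-line polynomial check above, so I do not expect a real obstacle here. A fully self-contained alternative would be to insert the explicit formula \eqref{579Bre} with speeds \eqref{speeds579} directly into \eqref{ide1nvbc2} (equivalently, to verify the fourth-order ODE of Theorem~\ref{MT0} in the $5$th case); that is a finite computation, best organized through the functions $G,F$ of \eqref{partialMass} together with $B^2=\tfrac12\partial_x^2\log(G^2+F^2)$, but it is substantially longer and less transparent than the argument just sketched.
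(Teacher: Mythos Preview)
Your argument is correct, and it is genuinely different from the paper's proof. The paper proceeds by direct computation: it writes $B=H/N$ and $\tilde B_t=P/N$ with explicit trigonometric--hyperbolic expressions for $H$, $N$, $P$ (and their $x$-derivatives $H_1,H_2,N_1,N_2$), expresses $B_{xx}+2B^3$ as a rational function of these, collects everything over the common denominator $N^3$, and verifies that the resulting numerator $M_0$ vanishes identically. Your route instead leverages that the $5$th-order breather shares its spatial profile $\Psi(y_1,y_2)$ with the classical mKdV breather, imports the integrated mKdV identity $\tilde B_t+B_{xx}+2B^3=0$ (with the classical speeds $\delta^{cl},\ga^{cl}$) as a pointwise identity in $(y_1,y_2)$, and then reduces the lemma to the two polynomial checks
\[
(\al^2+\bt^2)^2+2(\bt^2-\al^2)\,\delta^{cl}=\delta_5,\qquad (\al^2+\bt^2)^2+2(\bt^2-\al^2)\,\ga^{cl}=\ga_5,
\]
which are immediate. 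This is shorter and more conceptual: it explains \emph{why} the identity holds (it is just a relation among speeds in the hierarchy) rather than merely verifying it, and it makes visible that the same trick would produce analogous identities at any order once the corresponding speeds are known. The paper's approach, by contrast, is fully self-contained and does not rely on the classical mKdV breather identity as external input; this is a trade-off, and the paper in fact uses the longer explicit method again (with \emph{Mathematica}) when handling the $7$th and $9$th cases of Theorem~\ref{GBnvbc}, where no shortcut like yours is taken.
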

\begin{proof}
We will use the following notation:

\begin{eqnarray}\label{breHN}
&& B:=2\partial_x\Big[\arctan\Big(\frac{\bt}{\al}\frac{\sin(\al y_1)}{\cosh(\bt y_2)}\Big)\Big]=\frac{H(t,x)}{N(t,x)} ,\nonu\\
&&\nonu\\
&&H:=H(t,x)=2\Big(\bt\al^2\cosh(\bt y_2)\cos(\al y_1) - \bt^2\al\sinh(\bt y_2)\sin(\al y_1)\Big),\nonu\\
&&\nonu\\
&&N:=N(t,x)=\al^2\cosh^2(\bt y_2)+\bt^2\sin^2(\al y_1),\nonu
\end{eqnarray}
\noindent
and from $\tilde{B}$ \eqref{5tBmkdv},
\begin{eqnarray}\label{brePN}
&& \tilde{B}_{t}:=2\partial_t\Big[\arctan\Big(\frac{\bt}{\al}\frac{\sin(\al y_1)}{\cosh(\bt y_2)}\Big)\Big]=\frac{P(t,x)}{N(t,x)} ,\nonu\\
&&\nonu\\
&&P:=P(t,x)=2\Big(\bt\al\delta_5\cosh(\bt y_2)\cos(\al y_1) -
\bt\al\gamma_5\sinh(\bt y_2)\sin(\al y_1)\Big),
\end{eqnarray}
\noindent
with $\delta_5,\gamma_5$ as in \eqref{speeds579}. For the sake of simplicity, we are going to use the following notation:
\begin{align}\label{notacionN1}
&  N_1:=N_x=2\al\bt^2\cos(\al y_1)\sin(\al y_1) + 2\al^2\bt\cosh(\bt y_2)\sinh(\bt y_2),\\
&  N_2:=N_{xx}=2\al^2\bt^2(\cos^2(\al y_1)-\sin^2(\al y_1) +
\cosh^2(\bt y_2) + \sinh^2(\bt y_2)),
\end{align}
\medskip
and
\begin{align}
&  H_1:=H_x=-2\al\bt(\al^2+\bt^2)\cosh(\bt y_2)\sin(\al y_1),\\
&  H_2:=H_{xx}=-2\al\bt(\bt^2+\al^2)(\al\cosh(\bt y_2)\cos(\al y_1)
+\bt \sin(\al y_1)\sinh(\bt y_2)).\label{notacionHN}
\end{align}
\noindent
First of all, we start rewriting the following terms of the l.h.s. of \eqref{ide1nvbc2}:

\be\label{h1} B_{xx} +2B^3 = \frac{1}{N^3}\Big(2 H^3 + H_2 N^2 - 2 H_1N N_1 + 2 H N_1^2 - H N N_2\Big), \ee \noindent
\noindent
and hence, we have that

\be\label{identidad5th}
\begin{aligned}
& - \tilde B_t  - 2(\bt^2 -\al^2)(B_{xx} + 2B^3)  + (\al^2 +\bt^2)^2B=\frac{M_0}{N^3},
\end{aligned}
\ee
\noindent
with

\be\label{M0}
\begin{aligned}
&M_0:=-PN^2 + (\al^2 +\bt^2)^2HN^2 - 2(\bt^2 -\al^2) \Big(2H^3-2NH_1N_1+2HN_1^2+N^2H_2-HNN_2\Big).\\
\end{aligned}
\ee
\noindent
Indeed, we verify,  after substituting $P$ and $H's$ and $N's$ terms explicitly in
\eqref{M0} and having in mind basic trigonometric and hyperbolic identities, that
\be\label{sumaM0}
M_0=0,
\ee
\noindent
and we conclude.
\end{proof}

We are ready now to present one of the most important results of this work, namely, we are going to show that in fact, breather solutions \eqref{579Bre}
of 5th, 7th and 9th-mKdV equations satisfy the same fourth order ODE satisfied by the \emph{classical} mKdV breather solution \eqref{Bremkdv} and it characterizes them. This
result means that this ODE identifies breather functions at different levels in the mKdV hierarchy, i.e. at the mKdV level and at 5th, 7th and 9th mKdV levels,
as being solutions of the same stationary fourth order ODE.

\begin{thm}\label{GBnvbc} Let $B= B_{\al,\bt}$ be any 5th, 7th or 9th-mKdV breather solution given in  \eqref{579Bre}.
Then, for any  fixed $t\in \R$, $B$ satisfies the same nonlinear stationary equation than the classical mKdV breather solution \eqref{Bremkdv}, namely

\bea\label{EcBnvbc} &G[B]:=B_{4x} + 10 BB_x^2 + 10B^2 B_{xx} + 6B^5 -2(\bt^2 -\al^2) (B_{xx} + 2B^3)  + (\al^2 +\bt^2)^2B =0.\ \eea

\end{thm}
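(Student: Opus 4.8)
The plan is to reduce all three cases (5th, 7th, 9th) to a single computation by exploiting the fact — already established for solitons in \eqref{eqQc2} and conjectured in \eqref{nthODE} — that the nonlinearities $f_{2n+1}$ are not independent but can be collapsed using the lower-order elliptic identity. Concretely, for the 5th-mKdV breather I would start from the identity $\tilde B_t + B_{4x} + f_5(B) = 0$ of \lemref{5Id1mkdv}, item (1), with $n=2$. The key input is \lemref{Iddificil}, which expresses $\tilde B_t$ purely in terms of $B$, $B_{xx}$ and $B^3$: substituting \eqref{ide1nvbc2} directly into that evolution identity gives
\[
B_{4x} + f_5(B) = -\tilde B_t = -(\al^2+\bt^2)^2 B + 2(\bt^2-\al^2)(B_{xx}+2B^3),
\]
which, after moving everything to one side and recalling $f_5(B) = 10BB_x^2 + 10B^2B_{xx} + 6B^5$, is precisely $G[B]=0$. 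So in the 5th-order case the theorem is essentially an immediate corollary of \lemref{Iddificil}.

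For the 7th and 9th cases the same strategy applies but requires an analogue of \lemref{Iddificil}: one must show that $\tilde B_t$, which by \lemref{5Id1mkdv} equals $-(B_{6x}+f_7(B))$ in the 7th case and $-(B_{8x}+f_9(B))$ in the 9th case, can \emph{also} be written as $(\al^2+\bt^2)^2 B - 2(\bt^2-\al^2)(B_{xx}+2B^3)$. Note that $\tilde B_t$ is an explicit function of $(t,x)$ — it is the time derivative of the same arctan profile, just with the speeds $(\delta_7,\ga_7)$ or $(\delta_9,\ga_9)$ replaced in $y_1,y_2$. Thus the required identity is again a finite trigonometric/hyperbolic verification of exactly the type carried out in \lemref{Iddificil}: write $\tilde B_t = P/N$ with $P$ built from $\delta_7,\ga_7$ (resp.\ $\delta_9,\ga_9$), expand the target combination $(\al^2+\bt^2)^2 B - 2(\bt^2-\al^2)(B_{xx}+2B^3)$ over the common denominator $N^3$ using \eqref{h1}, and check the numerator vanishes. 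The point that makes this work — and the real content of the theorem — is the arithmetic miracle that the higher-order speeds satisfy
\[
\delta_{2n+1} - \ga_{2n+1}\ \text{and}\ \delta_{2n+1}+\ga_{2n+1}
\]
combine so that $P$ collapses to the \emph{same} right-hand side regardless of $n$; equivalently, one checks from \eqref{speeds579}--\eqref{speeds579-2} the two scalar identities that encode $\tilde B_t = (\al^2+\bt^2)^2 B - 2(\bt^2-\al^2)(B_{xx}+2B^3)$, namely that the coefficient pattern of $\delta_{2n+1},\ga_{2n+1}$ reproduces $(\al^2+\bt^2)^2$ and $-2(\bt^2-\al^2)$ after the substitution.

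Once this reduction of $\tilde B_t$ is in hand for all three orders, the conclusion is uniform: in every case \lemref{5Id1mkdv}(1) gives $B_{2nx} + f_{2n+1}(B) = -\tilde B_t$, and the reduced form of $\tilde B_t$ turns the left-hand side of $G[B]$ into zero — but one must additionally verify that $B_{2nx}+f_{2n+1}(B)$ itself reduces to $B_{4x}+10BB_x^2+10B^2B_{xx}+6B^5$ for $n=3,4$. That last reduction is exactly the breather analogue of \eqref{eqQc2}: using the fourth-order ODE $G[B]=0$ (or, to avoid circularity, using repeatedly the relation \eqref{ide1nvbc2} together with differentiations of it) one collapses the higher nonlinearities down to the 5th-order ones, just as the soliton computation collapsed $f_5(Q_c)$ via $Q_c''=cQ_c-2Q_c^3$.

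\textbf{Main obstacle.} The conceptual step is trivial; the labor is entirely in the trigonometric verification of the $\tilde B_t$-reduction for the 7th and 9th orders — these are longer than \eqref{M0} because $f_7,f_9$ (and hence the polynomial identities among $\delta_{2n+1},\ga_{2n+1},\al,\bt$ that must hold) involve many more terms, as the Appendix sizes suggest. I expect the cleanest route is \emph{not} to expand $f_7(B),f_9(B)$ at all, but to prove directly the statement ``$\tilde B_t = (\al^2+\bt^2)^2 B - 2(\bt^2-\al^2)(B_{xx}+2B^3)$ for the 7th- and 9th-speed breathers'' by the $M_0/N^3$ method of \lemref{Iddificil}, and then invoke \lemref{5Id1mkdv}(1) plus the already-proven 5th-order case of the theorem (applied with the 7th/9th speeds — legitimate since $G[B]$ only sees $\al,\bt$, not the speeds) to finish. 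In other words, the theorem for all three orders follows from the single family of scalar identities relating $\delta_{2n+1},\ga_{2n+1}$ to $(\al^2\pm\bt^2)$, and verifying those is the only real work.
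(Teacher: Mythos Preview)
Your 5th-order argument is exactly the paper's. The gap is in the 7th/9th cases: the identity $\tilde B_t = (\al^2+\bt^2)^2 B - 2(\bt^2-\al^2)(B_{xx}+2B^3)$ is \emph{false} for the 7th- and 9th-order speeds. In the notation of \lemref{Iddificil}, $\tilde B_t = P/N$ with $P$ depending \emph{linearly} on $(\delta_i,\ga_i)$, while the right-hand side is a fixed spatial expression in $\al,\bt$; since $(\delta_5,\ga_5)\neq(\delta_7,\ga_7)\neq(\delta_9,\ga_9)$, the three $P$'s are genuinely different and no ``arithmetic miracle'' collapses them to the same thing. Indeed the paper's Corollary immediately \emph{after} the theorem records the correct (and much longer) expressions for $\tilde B_t$ in the 7th and 9th cases --- these are \eqref{2ndmkdv7}--\eqref{2ndmkdv9}, derived \emph{from} $G[B]=0$, not used to prove it. So the ``$M_0/N^3$ method for 7th/9th speeds'' that you describe as the main labor would simply fail: the numerator would not vanish.

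What does work --- and you already wrote it down in your parenthetical --- is the observation that $G[B]$ is a purely spatial expression in $\al,\bt$ and in the profile, with no reference to the speeds. Since the 5th, 7th and 9th breathers share the \emph{same} two-parameter family of spatial profiles (the speeds only shift the free phases $x_1,x_2$), the 5th-order proof via \lemref{Iddificil} already establishes $G[B]=0$ for all three orders at once. That single sentence is a complete and cleaner proof than the paper's; you should promote it from an aside to the main argument and discard the $\tilde B_t$-analog entirely. For comparison, the paper does not take this route for the 7th/9th cases: it explicitly says no analogue of \eqref{ide1nvbc2} is available, then rewrites $G[B]$ algebraically as in \eqref{7thODEv0}, expresses everything over the common denominator $N^5$ via the $H,N$ representation, and verifies by symbolic computation that the numerator $M_1+M_2$ vanishes --- a direct, speed-independent check of the spatial identity.
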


\medskip

\begin{proof}
In the case of the 5th order breather, since by \eqref{52ndmkdvnew} the first four terms in \eqref{EcBnvbc} equal  $- \tilde B_t$ and using the above identity
\eqref{ide1nvbc2}, we simply get
\begin{align*}
& G[B] = - \tilde B_t  - 2(\bt^2 -\al^2)(B_{xx} + 2B^3)  + (\al^2 +\bt^2)^2B = 0.
\end{align*}


The 7th and 9th order cases are more involved since we do not have at hand any identity like (\ref{ide1nvbc2}). Therefore, we first recast the l.h.s. of
\eqref{EcBnvbc}. Taking into account the r.h.s. of \eqref{v25mkdv}, we rewrite the first four terms in \eqref{EcBnvbc} and
 simplify the l.h.s. of \eqref{EcBnvbc}, as follows:

\begin{align}\label{7thODEv0}
& B_{4x} + 10 BB_x^2 + 10B^2 B_{xx} + 6B^5 -2(\bt^2 -\al^2) (B_{xx} + 2B^3)  +(\al^2 +\bt^2)^2B\nonu\\
& = \partial_{x}^2(B_{xx}+ 2B^3) - 2B(B_x^2 - 2BB_{xx} - 3B^4)-2(\bt^2 -\al^2) (B_{xx} + 2B^3)  +(\al^2 +\bt^2)^2B\nonu\\
& =\partial_{x}^2(B_{xx}+ 2B^3) - 2B([B_x^2 + B^4]- 2B[B_{xx} +2B^3])-2(\bt^2 -\al^2) (B_{xx} + 2B^3)  +(\al^2 +\bt^2)^2B\nonu\\
& = \partial_{x}^2(B_{xx}+ 2B^3) + (4B^2-2(\bt^2 -\al^2))(B_{xx} +2B^3) - 2B[B_x^2 + B^4]+(\al^2 +\bt^2)^2B.
\end{align}
\noindent
Now, we prove directly that \eqref{7thODEv0} vanishes. Having in mind notation \eqref{breHN} and \eqref{notacionN1}-\eqref{notacionHN}, we
extend it considering the  following derivatives:



\begin{align}\label{notacionN17}
&  N_3:=N_{xxx}=-8\al^3\bt^2\cos(\al y_1)\sin(\al y_1) + 8\al^2\bt^3\cosh(\bt y_2)\sinh(\bt y_2),\\
&  N_4:=N_{4x}=8\al^2\bt^2(-\al^2\cos^2(\al y_1) + \al^2\sin^2(\al
y_1) + \bt^2\cosh^2(\bt y_2) + \bt^2\sinh^2(\bt y_2)),
\end{align}
\medskip
and
\begin{align}\label{notacionHN7}
&  H_3:=H_{xxx}=2\al\bt((\al^4-\bt^4)\cosh(\bt y_2)\sin(\al y_1) - 2\al\bt(\al^2 + \bt^2)\cos(\al y_1)\sinh(\bt y_2)),\\
&  H_4:=H_{4x}=2\al\bt((\al^5-2\al^3\bt^2-3\al\bt^4)\cosh(\bt y_2)\cos(\al y_1) \nonu\\
&+ (3\al^4\bt+2\al^2\bt^3-\bt^5)\sin(\al y_1)\sinh(\bt y_2)).
\end{align}
\noindent
First of all, remembering from \eqref{h1} that 

\be\label{h17} B_{xx} +2B^3 = \frac{1}{N^3}\Big(2 H^3 + H_2 N^2 - 2 H_1N N_1 + 2 H N_1^2 - H N N_2\Big), \ee
\noindent
we get
\be\label{h27}
\begin{aligned}
&\partial_x^2(B_{xx} +2B^3) = \frac{1}{N^5}\Big(6 H^2 N (H_2 N - 6 H_1 N_1) + 6 H^3 (4 N_1^2 - N N_2)\nonu\\
&+ N (N (H_4 N^2 - 4 H_3 N N_1 + 12 H_2 N_1^2 - 6 H_2 N N_2) - 4 H_1 (6 N_1^3 - 6 N N_1 N_2 + N^2 N_3))\nonu\\
&+ H (24 N_1^4 - 36 N N_1^2 N_2 +  2 N^2 (6 H_1^2 + 3 N_2^2 + 4 N_1 N_3) - N^3 N_4)\Big).
\end{aligned}
\ee \noindent Hence, we have that

\be\label{h37}
\begin{aligned}
& \partial_{x}^2(B_{xx}+ 2B^3) + (4B^2-2(\bt^2 -\al^2))(B_{xx} +2B^3)=\frac{M_1}{N^5},
\end{aligned}
\ee \noindent with \be\label{M1}
\begin{aligned}
&M_1:=\Big(8 H^5 + 2 H^2 N (5 H_2 N - 22 H_1 N_1)+   2 H^3 (16 N_1^2 - 5 N N_2 + 2(\al^2 - \bt^2)N^2)\nonu\\
&+   H \Big[24 N_1^4 - 36 N N_1^2 N_2
+  2 N^2 (6 H_1^2 + 3 N_2^2 + 4 N_1 N_3 +  2(\al^2 - \bt^2) N_1^2 ) - N^3 (N_4 + 2  (\al^2 - \bt^2)N_2)\Big]\nonu\\
& +   N [-24 H_1 N_1^3 + 12 N N_1 (H_2 N_1 + 2 H_1 N_2) +  N^3 (H_4 + 2(\al^2 - \bt^2)H_2) \nonu\\
& - 2 N^2 (2 H_3 N_1 + 3 H_2 N_2 +  2 H_1 (N_3 + (\al^2 -
\bt^2)N_1))]\Big).
\end{aligned}
\ee
\noindent
Moreover, we have that
\be\label{h47}
\begin{aligned}
& - 2B[B_x^2 + B^4]=\frac{-2H}{N^5}\Big(H^4 + (H_1 N - HN_1)^2\Big),
\end{aligned}
\ee
\noindent
and therefore,
\be\label{h57}
\begin{aligned}
& - 2B[B_x^2 + B^4]+(\al^2 +\bt^2)^2B=\frac{M_2}{N^5},
\end{aligned}
\ee
\noindent
with
\be\label{M2}
\begin{aligned}
&M_2:=\Big(H (-2 (H^4 + (H_1 N - H N_1)^2) + (\al^2 + \bt^2)^2N^4)\Big).
\end{aligned}
\ee

Hence, we get the following simplication of \eqref{7thODEv0}:

\begin{align}\label{7thODEv1}
G[B] &= \partial_{x}^2(B_{xx}+ 2B^3) + (4B^2-2(\beta^2-\alpha^2))(B_{xx} +2B^3) - 2B[B_x^2 + B^4]+(\alpha^2+\beta^2)^2B\nonu\\
& = \frac{M_1+ M_2}{N^5},
\end{align}
\noindent with $M_1,~M_2$ in \eqref{M1} and \eqref{M2} respectively. In fact, we verify, using the symbolic software \emph{Mathematica},
that after substituting $H's$ and $N's$ terms explicitly in \eqref{7thODEv1} and lengthy rearrangements, we get

\be\begin{aligned}\label{sumaM1M2v0}
&~M_1 + M_2=\sum_{i=1}^3p_{ij}\sin(\al y_1)^{2i} + \sum_{i=1}^4 q_{ij}\sin(\al y_1)^{2i-1},\\
&~p_{ij}=\sum_{j=0}^{L_i}a_{ij}\cos(\al y_1)\cosh(\bt y_2)^{2j+1},\qquad q_{ij}=\sum_{j=0}^{L^{'}_i}b_{ij}\sinh(\bt y_2)\cosh(\bt y_2)^{2j},
\quad ~L_i,~L^{'}_i\in\N.
\end{aligned}\ee
\medskip
\noindent
It is easy to see that $a_{ij}=b_{ij}=0,~\forall i=1,\dots,4,~j=0,\dots,L_i,~L^{'}_i$. Therefore we get that

\be\label{sumaM1M2}
M_1 + M_2=0,
\ee \noindent and we conclude.
\end{proof}

A  direct consequence from Theorem \ref{GBnvbc} and identity \eqref{52ndmkdvnew}, implies that for the 7th and 9th order cases,
we are able to obtain a new identity relating $\tilde{B}_t$ and lower order spatial derivatives of the 7th and 9th-mKdV breathers
(see \eqref{52ndmkdvnew} for comparison):

\begin{cor}
Let $B= B_{\al,\bt}$ be any  7th or 9th-mKdV breather solutions \eqref{579breather} as it corresponds. Then, for any  fixed $t\in \R$, the associated profile $\tilde B$ \eqref{5tBmkdv} to any
 7th or 9th-mKdVbreather satisfies the following nonlinear identities:
\begin{enumerate}
 \item\emph{7th order case:}\\
\be\label{2ndmkdv7}
\begin{aligned}
& \tilde B_{t} -2(\bt^2-\al^2)(\al^2+\bt^2)^2B + 4(\al^4-6\al^2\bt^2+\bt^4)B^3 + 4(\bt^2-\al^2)B^5 - 4B^7\\
&+(3\al^4-10\al^2\bt^2+3\bt^4)B_{xx}  + 4(\bt^2-\al^2)BB_x^2 - 20B^3B_x^2 + 2BB_{xx}^2 - 4BB_xB_{3x} = 0.
\end{aligned}
\ee

\item\emph{9th order case:}\\
\be\label{2ndmkdv9}
\begin{aligned}
& ~\tilde B_{t} + a_0 B +a_1 B^3
+a_2 B^5 + 16 \left(\beta ^2-\alpha ^2\right) B^7 -26 B^9 + a_3B_x^2 B +32 \left(\alpha ^2-\beta ^2\right) B_x^2 B^3 - 100 B_x^2 B^5\\
&~  -2 B_x^4 B + a_4 B_{xx} - 6 \left(\alpha ^2+\beta ^2\right)^2 B_{xx} B^2 + 20 \left(\beta ^2-\alpha ^2\right) B_{xx} B^4 - 28 B_{xx} B^6 
+4 \left(\beta ^2-\alpha ^2\right) B_x^2 B_{xx} -12 B_x^2 B_{xx} B^2\\
&~   + 8 \left(\beta ^2-\alpha ^2\right) B_{xx}^2 B - 4 B_{xx}^2 B^3 + 2 B_{xx}^3  + 8 \left(\alpha ^2-\beta ^2\right) B_x B_{3x} B
  - 32 B_x B_{3x} B^3 -4 B_x B_{xx} B_{3x} -2 B_{3x}^2 B=0,
\end{aligned}
\ee
\medskip\noindent
\emph{for}
\[\begin{aligned}
&~a_0=-\left(\alpha ^2+\beta ^2\right)^2(3\al^4-10\al^2\bt^2+3\bt^4),~ a_1=-4(\al^2-\bt^2)(\al^4-14\al^2\bt^2+\bt^4),\\
&~a_2=-2 \left(\alpha ^4+18 \alpha ^2 \beta ^2+\beta ^4\right), ~a_3=2 \left(5 \alpha ^4-6 \alpha ^2 \beta ^2+5 \beta ^4\right),~ a_4=-4 (\al^2-\bt^2)(\al^4-6\al^2\bt^2+\bt^4).\\
\end{aligned}
\]
\end{enumerate}
\end{cor}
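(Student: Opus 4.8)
The plan is to obtain the two identities \eqref{2ndmkdv7} and \eqref{2ndmkdv9} by combining the characterizing fourth order ODE of Theorem~\ref{GBnvbc} with the first order identity \eqref{52ndmkdvnew}, exactly as one passes from the (unavailable) analogue of Lemma~\ref{Iddificil} to a relation for $\tilde B_t$. First I would record, for $n=3$ and $n=4$, the identity \eqref{52ndmkdvnew}, which reads $\tilde B_t = -(B_{2nx} + f_{2n+1}(B))$; this expresses $\tilde B_t$ in terms of spatial derivatives of $B$ up to order $2n$ (order $6$ for the 7th-mKdV breather, order $8$ for the 9th-mKdV breather). The goal is then purely algebraic: reduce the high order spatial derivatives appearing in $B_{2nx} + f_{2n+1}(B)$ down to derivatives of order at most three, using the fact (Theorem~\ref{GBnvbc}) that $B$ solves $G[B]=0$, i.e.
\[
B_{4x} = -10 BB_x^2 - 10B^2 B_{xx} - 6B^5 + 2(\bt^2 -\al^2)(B_{xx} + 2B^3) - (\al^2 +\bt^2)^2B.
\]

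The main step is an elimination/reduction procedure. In the 7th order case one differentiates the relation $G[B]=0$ once and twice in $x$ to express $B_{5x}$ and $B_{6x}$ in terms of $B,B_x,B_{xx},B_{3x}$ and their products; substituting these into $B_{6x}+f_7(B)$ and collecting terms should collapse the expression to the right-hand side of \eqref{2ndmkdv7}. Concretely, I would treat $B_{4x}$ and all higher derivatives as ``dependent variables'' determined by the jet $(B,B_x,B_{xx},B_{3x})$ via repeated differentiation of $G[B]=0$, substitute everything into \eqref{52ndmkdvnew}, and simplify; the coefficients $(3\al^4-10\al^2\bt^2+3\bt^4)$ etc. in \eqref{2ndmkdv7} are exactly what one expects from differentiating the $2(\bt^2-\al^2)$ and $(\al^2+\bt^2)^2$ terms of $G[B]$ twice against the nonlinear feedback. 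The 9th order case is identical in spirit: one needs $B_{5x},\dots,B_{8x}$, obtained by differentiating $G[B]=0$ up to four times, each substitution lowering the order; plugging into $B_{8x}+f_9(B)$ and invoking \eqref{52ndmkdvnew} yields \eqref{2ndmkdv9} with the stated $a_0,\dots,a_4$.

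The hard part will be the bookkeeping in the 9th order reduction: $f_9$ already has fifteen terms, and eliminating $B_{8x},B_{7x},B_{6x},B_{5x}$ in favour of the order-$\le 3$ jet produces a large polynomial in $B,B_x,B_{xx},B_{3x}$ whose coefficients are degree-8 polynomials in $(\al,\bt)$, so the cancellations leading to the compact form \eqref{2ndmkdv9} are delicate and are best verified with symbolic software (as was already done for \eqref{sumaM1M2} in the proof of Theorem~\ref{GBnvbc}). A secondary subtlety is that the reduction via $G[B]=0$ is only valid because $B$ is a breather (the ODE characterizes it), so one must be careful that no spurious ``integration constant'' term is introduced — here there is none, since every manipulation is a pointwise differentiation and substitution, not an integration. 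Once the polynomial identity in the four independent quantities $B,B_x,B_{xx},B_{3x}$ is checked to vanish coefficient-by-coefficient, the proof is complete.
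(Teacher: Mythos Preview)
Your proposal is correct and follows essentially the same approach as the paper: start from \eqref{52ndmkdvnew} to write $\tilde B_t = -(B_{2nx}+f_{2n+1}(B))$, then use the fourth order ODE $G[B]=0$ from Theorem~\ref{GBnvbc} (and its successive $x$-derivatives) to recursively eliminate $B_{4x},B_{5x},\dots,B_{2nx}$ in favour of $B,B_x,B_{xx},B_{3x}$. The paper's proof is just a two-line sketch of exactly this reduction, so your more detailed outline (including the remark that the manipulation is purely differential, hence no integration constants) matches it precisely.
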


\begin{proof}
For both 7th and 9th order cases, using $B_{4x}$ in \eqref{EcBnvbc}, and computing from it the expressions of $B_{6x}, B_{8x}$, and
substituting recursively $B_{4x}$, we get \eqref{2ndmkdv7} and \eqref{2ndmkdv9}. 
\end{proof}

\ms
\noindent

\section{Spectral analysis}\label{3}

\medskip

For  any 5th, 7th or 9th-mKdV breather solution $B=B_{\al,\bt}$, we define the  following fourth order linear operator:
\begin{align}\label{L1}
\mathcal L [z](x;t) &  :=  z_{(4x)}(x) -2(\bt^2 -\al^2) z_{xx}(x) +(\al^2 +\bt^2)^2 z(x)  + 10B^2 z_{xx}(x) + 20BB_x z_x(x) \nonu \\
&   \qquad  + \ \big[ 10B_x^2  +20 BB_{xx}  + 30 B^4 -12(\bt^2
-\al^2) B^2 \big] z(x).
\end{align}

\medskip

As a direct consequence of the already studied  spectral properties  of the linearized operator $\mathcal L [z]$, associated to the mKdV breather solution $B$,  in \cite{AM}, we 
obtain the same results for the 5th, 7th or 9th-mKdV breather solutions. In the following lines and for the sake of completeness, we only summarize and list the main features of \eqref{L1}:
 consider first the functions $B_1,~B_2$ \eqref{B12} associated to  5th, 7th and 9th-mKdV breather solutions $B$ (as it corresponds)
and denote as \emph{scaling directions}, the derivatives
\begin{align}\label{DDeltaAB}
\Lambda_\al B =\partial_\al B ,\quad \Lambda_\bt B = \partial_\bt B.
\end{align}

\medskip
\noindent
We get the following 


\medskip

\begin{lem}\label{Lspectral} For any  5th, 7th or 9th-mKdV breather solution $B=B_{\al,\bt}$, we get that
\begin{enumerate}
 \item (\emph{Continuous spectrum}) $\mathcal L$ is a linear, unbounded operator in $L^2(\R)$, with dense domain $H^4(\R)$. Moreover, $\mathcal L$ is self-adjoint,
and is a compact perturbation of the constant coefficients operator
\[
\mathcal L_{0} [z]:= z_{(4x)} -2(\bt^2 -\al^2) z_{xx} +(\al^2 +\bt^2)^2 z.
\]
In particular, the continuous spectrum of $\mathcal L$ is the closed interval $[(\al^2 +\bt^2)^2,+\infty)$
in the case $\beta\geq \al$, and $[ 4\al^2 \bt^2 ,+\infty)$ in the case $\beta< \al$, with no embedded eigenvalues are contained in this region.

\medskip

\item (\emph{Kernel}) For each $t\in \R$, one has
\[
\ker \mathcal L =\spawn \big\{ B_1(t;x_1,x_2), B_2(t;x_1,x_2)\big\}.
\]

\medskip

\item Consider the scaling directions $\Lambda_\al B$ and $\Lambda_\bt B$ introduced in \eqref{DDeltaAB}. Then
\be\label{Pos}
\int_\R  \Lambda_\al B \, \mathcal L [\Lambda_\al B]  =  16 \al^2\bt  >0,
\ee
and
\be\label{Neg}
\int_\R  \Lambda_\bt B\, \mathcal L [\Lambda_\bt B]  =  -16 \al^2 \bt <0.
\ee

\medskip

\item Let
\be\label{B0}
B_0 :=  \frac{\al\Lambda_{\bt} B + \bt \Lambda_\al B}{8\al\bt (\al^2 +\bt^2)}.
\ee
Then $B_0$ is Schwartz and satisfies $\mathcal L[B_0] = - B$,
\be\label{negB0}
\int_\R B_0 B = \frac{1}{4\bt (\al^2 +\bt^2)} > 0, \quad \hbox{ and }\quad
 \frac 12\int_\R B_0\mathcal L [B_0]  = - \frac{1}{8\bt(\al^2 +\bt^2)}<0.
\ee

\medskip

\item Let $B_1, B_2$ the  kernel elements defined in \eqref{B12} and $W$  the Wronskian matrix of the functions $B_1$ and $B_2$,
\be\label{WM}
W[B_1, B_2] (t;x) := \left[ \begin{array}{cc} B_1 & B_2 \\  (B_1)_x & (B_2)_x  \end{array} \right] (t,x).
\ee
Then

\be\label{Wsimpl} \det W[B_1,B_2](t;x) =-\frac{ 8\al^3\bt^3 (\al^2
+\bt^2)[ \al\sinh(2 \bt y_2) -\bt \sin(2 \al y_1)]}{(\al^2 +\bt^2
+\al^2 \cosh(2\bt y_2) - \bt^2 \cos(2\al y_1))^2}. \ee

\medskip

\item The operator $\mathcal{L}$ defined in (\ref{L1}) (associated with 5th, 7th and 9th mKdV equations) has a
unique negative eigenvalue $-\lambda_0^2<0$, of multiplicity one, and $\lambda_0=\lambda_0(\alpha,\beta,x_1,x_2,t)$.

\medskip

\item (\emph{Coercivity}) Let us consider the quadratic from associated to $\mathcal{L}$ \eqref{L1}:

\begin{eqnarray}\label{Qmu}\mathcal{Q}[z]&:=
&\int_{\mathbb{R}}z\mathcal{L}[z]=
\int_{\mathbb{R}}z_{xx}^2+2(\beta^2-\alpha^2)\int_{\mathbb{R}}z_x^2+(\alpha^2+\beta^2)^2\int_{\mathbb{R}}z^2-10\int_{\mathbb{R}}B^2z_x^2\nonumber\\
&-&10\int_{\mathbb{R}}B_x^2z^2-40\int_{\mathbb{R}}BB_{x}zz_x+30\int_{\mathbb{R}}B^4z^2-12(\beta^2-\alpha^2)\int_{\mathbb{R}}B^2z^2.\end{eqnarray}

There exists a continuous function $\nu_0 =\nu_0(\al,\bt)$, well-defined and positive for all $\al,\bt>0$ and such that, for all $z_0\in H^2(\R)$ satisfying
\be\label{Or1}
\int_\R z_0 B_{-1} =\int_\R z_0 B_1 =\int_\R z_0 B_2 =0,
\ee
then
\be\label{coee}
\mathcal Q[ z_0] \geq \nu_0\| z_0\|_{H^2(\R)}^2.
\ee
\end{enumerate}
\end{lem}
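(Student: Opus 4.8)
\textbf{Proof plan for Lemma \ref{Lspectral}.}

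The strategy is to reduce everything to the corresponding spectral lemma for the \emph{classical} mKdV breather, already established in \cite{AM}. The crucial observation is that by Theorem \ref{GBnvbc}, any 5th, 7th or 9th-mKdV breather $B$ satisfies exactly the same fourth order stationary ODE $G[B]=0$ as the classical mKdV breather, and since $B_{\al,\bt}$ as a \emph{function of $x$} (at fixed $t$) has the same profile \eqref{579Bre} as the classical breather (differing only in the phase velocities $\delta_i,\ga_i$), the operator $\mathcal L$ in \eqref{L1} is literally the Fréchet derivative of $G$ at $B$ and thus coincides, pointwise in $x$, with the linearized operator studied in \cite{AM}. Consequently items (1)--(7) are not new facts but transcriptions: one invokes the results of \cite[\S3]{AM} with the substitution of the higher-order velocities, checking only that the steps in that reference used no property of $B$ beyond (i) the explicit profile \eqref{579Bre}, (ii) the ODE $G[B]=0$, and (iii) that $B_1,B_2$ span the kernel.

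Concretely, I would proceed item by item. For (1), write $\mathcal L=\mathcal L_0+V$ where $V$ collects the $B$-dependent terms; since $B$ and all its derivatives are Schwartz (they decay exponentially by \eqref{579Bre}), $V$ is relatively compact with respect to the constant-coefficient self-adjoint operator $\mathcal L_0$, so Weyl's theorem gives $\sigma_{ess}(\mathcal L)=\sigma(\mathcal L_0)$; computing the symbol $\xi^4+2(\bt^2-\al^2)\xi^2+(\al^2+\bt^2)^2$ and minimizing over $\xi\in\R$ yields the stated intervals $[(\al^2+\bt^2)^2,\infty)$ or $[4\al^2\bt^2,\infty)$, and absence of embedded eigenvalues follows as in \cite{AM}. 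For (2), differentiating $G[B]=0$ with respect to $x_1$ and $x_2$ gives $\mathcal L[B_1]=\mathcal L[B_2]=0$; that these exhaust the kernel uses the explicit Wronskian \eqref{Wsimpl}, whose non-vanishing (generically) and structure forces any fourth-order kernel element to be a combination of the two decaying solutions $B_1,B_2$, exactly as in the classical case. For (3) and (4), differentiate $G[B]=0$ with respect to the scaling parameters $\al,\bt$: this produces $\mathcal L[\Lambda_\al B]$ and $\mathcal L[\Lambda_\bt B]$ as explicit combinations of $B$ and its derivatives (the scaling parameters appear in $G$ only through the coefficients $-2(\bt^2-\al^2)$ and $(\al^2+\bt^2)^2$), and then $\int_\R \Lambda_\al B\,\mathcal L[\Lambda_\al B]$, $\int_\R \Lambda_\bt B\,\mathcal L[\Lambda_\bt B]$, together with $\mathcal L[B_0]=-B$ and the integrals in \eqref{negB0}, are computed by the same elementary (if tedious) integrations carried out in \cite{AM}; these are purely local computations in $x$ that are insensitive to the velocities. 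Item (5) is just \eqref{Wsimpl} written out from \eqref{579Bre}. Items (6) and (7) then follow by the standard Sturm-oscillation / index argument: $\mathcal L$ has exactly one negative eigenvalue (from \eqref{negB0}, $\mathcal Q[B_0]<0$, so at least one; the fourth-order oscillation theory together with the kernel structure from (2) rules out a second), and coercivity of $\mathcal Q$ on the orthogonal complement of $\spawn\{B_{-1},B_1,B_2\}$ is obtained exactly as in \cite[Prop.\ 3.x]{AM}, using that $B_{-1}$ (an eigenfunction for $-\la_0^2$) together with the two kernel directions spans the non-positive subspace modulo the spectral gap.

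The main obstacle is not conceptual but the verification, at each step, that nothing in the arguments of \cite{AM} secretly used the \emph{classical} velocities $\delta=\al^2-3\bt^2,\ \ga=3\al^2-\bt^2$ rather than the higher-order ones \eqref{speeds579}--\eqref{speeds579-2}; but since $\mathcal L$ and $G$ at fixed $t$ depend on $B$ only through its $x$-profile (which is velocity-independent) and on $\al,\bt$ through the fixed coefficients displayed in \eqref{L1}, every step transfers verbatim. The one place where the velocities could in principle matter is time-dependence of the kernel and of $\la_0$; but all statements in the lemma are "for each fixed $t$", and the $t$-dependence enters only through the translations $y_1,y_2$, so the spectral picture is genuinely $t$-independent up to conjugation by translation. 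Hence the proof is a direct citation of \cite{AM} with the trivial substitution of parameters, and I would phrase it as such, spelling out only items (1), (3)--(4) where the explicit computation with the new coefficients deserves to be recorded.
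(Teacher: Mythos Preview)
Your proposal is correct and takes essentially the same approach as the paper: the authors simply note that since the operator $\mathcal L$ in \eqref{L1} depends on $B$ only through its $x$-profile and on $\al,\bt$ through the fixed coefficients (both identical to the classical mKdV case), all spectral statements follow verbatim from \cite[Sect.~4]{AM}, and they refer the reader there without further argument. Your write-up is more explicit than the paper's bare citation---in particular your care to check that the velocities $\delta_i,\ga_i$ never enter the fixed-$t$ analysis is exactly the point that justifies the transfer---but the underlying idea is the same.
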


\ms
\noindent
For the proof of this Lemma, we refer the interested reader to \cite[Sect.4]{AM}.

\medskip

\section{Variational characterization of higher order mKdV breathers}\label{4}

\medskip
In this section we define a $H^2$-Lyapunov functional for both 5th, 7th and 9th-mKdV equations (\ref{5mkdv}), \eqref{7mkdv} and (\ref{9mkdv}) and 
associated to any of the higher order breather solutions. This approach is completely similar to the one depicted in \cite{AM} for the classical
mKdV breather solution.

\medskip



Let $B = B_{\al,\bt}$ be any
5th, 7th or 9th-mKdV breather solution and  $t\in \R$. Using a linear combination of the functionals $E_5[u]$, $E[u]$  and  $M[u]$ given in 
\eqref{E5}, \eqref{E1} and \eqref{M1cor}, we define

\medskip
\begin{equation}\label{LyapunovGE} \mathcal H[u(t)] := E_5[u](t) + 2(\bt^2-\al^2)E[u](t) + (\al^2 +\bt^2)^2 M[u](t).\end{equation}
\medskip
\noindent

Therefore, $\mathcal H[u]$ is  a real-valued conserved quantity, well-defined for $H^2$-solutions of (\ref{5mkdv}), \eqref{7mkdv} and (\ref{9mkdv}).
Moreover, one has the following:

\begin{lem}\label{crit} 5th, 7th and 9th-mKdV breathers \eqref{579Bre} are critical points of the Lyapunov functional $\mathcal H$ \eqref{LyapunovGE}.
In fact, for any $z\in H^2(\R)$  with sufficiently small $H^2$-norm, and $B=B_{\al,\bt}$  any 5th, 7th and 9th-mKdV breather solutions, then,
for all $t\in \R$,  one has \be\label{EE} \mathcal{H}[B +z] -\mathcal{H}[B]  = \frac 12\mathcal Q[z] + \mathcal N[z], \ee with
$\mathcal Q$ being the quadratic form defined in \eqref{Qmu}, and $\mathcal N[z]$ satisfying $|\mathcal N[z] | \leq K\|z\|_{H^2(\R)}^3.$
\end{lem}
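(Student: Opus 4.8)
The plan is to expand the functional $\mathcal H[u] = E_5[u] + 2(\bt^2-\al^2)E[u] + (\al^2+\bt^2)^2 M[u]$ about the breather profile $B$ directly, collecting terms by order in the perturbation $z$. First I would substitute $u = B+z$ into each of the three conserved quantities $E_5$, $E$, $M$ given in \eqref{E5}, \eqref{E1}, \eqref{M1cor} and expand. Each integrand is a polynomial in $u$ and its spatial derivatives up to order two, so the expansion terminates: $M[B+z]$ is quadratic in $z$, $E[B+z]$ is quartic, and $E_5[B+z]$ is sextic. Writing $\mathcal H[B+z] - \mathcal H[B] = L_1[z] + \tfrac12 Q[z] + \mathcal N[z]$, where $L_1$ is the linear-in-$z$ part, $Q$ is the quadratic part, and $\mathcal N$ collects all terms of order $\geq 3$, the goal is to identify $L_1 \equiv 0$ and $Q = \mathcal Q$ as in \eqref{Qmu}, and to bound $\mathcal N$.

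For the linear term: integration by parts moves all derivatives off $z$, so $L_1[z] = \int_\R z\, G[B]$ for some differential expression $G[B]$ in $B$. A routine computation (collecting the Euler–Lagrange contributions of $E_5$, $E$, $M$) shows that $G[B]$ is exactly the left-hand side of \eqref{EcBnvbc}, i.e. $G[B] = B_{4x} + 10BB_x^2 + 10B^2 B_{xx} + 6B^5 - 2(\bt^2-\al^2)(B_{xx}+2B^3) + (\al^2+\bt^2)^2 B$. By Theorem \ref{GBnvbc} this vanishes identically for any 5th, 7th or 9th-mKdV breather, so $L_1[z]=0$; this is the step where the variational characterization from the previous section is used, and it is what makes $B$ a genuine critical point. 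For the quadratic term, collecting the order-$z^2$ pieces from the three functionals and integrating by parts to the symmetric form gives precisely the six integrals in \eqref{Qmu}: the $\int z_{xx}^2$, $2(\bt^2-\al^2)\int z_x^2$, $(\al^2+\bt^2)^2\int z^2$ from the constant-coefficient parts, and $-10\int B^2 z_x^2 - 10\int B_x^2 z^2 - 40\int BB_x z z_x + 30\int B^4 z^2 - 12(\bt^2-\al^2)\int B^2 z^2$ from differentiating the nonlinear parts of $E_5$ and $E$ twice; thus $Q[z] = \mathcal Q[z]$, and $\tfrac12 Q = \tfrac12 \mathcal Q$ as claimed.

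It remains to estimate $\mathcal N[z]$, the sum of all monomials of degree $\geq 3$ in $(z, z_x, z_{xx})$ appearing after expansion; each such monomial is of the form $\int_\R (\text{bounded function of } B, B_x, B_{xx})\cdot z^{a} z_x^{b} z_{xx}^{c}$ with $a+b+c \geq 3$ and $b,c \leq 2$, $b+2c \leq 4$ (so no integrand contains more than two derivatives on any single $z$ factor, consistent with the $H^2$ regularity). Using that $B$ and its derivatives are bounded (indeed Schwartz-class up to the translation), Hölder's inequality, and the Sobolev embedding $H^2(\R) \hookrightarrow W^{1,\infty}(\R)$ — so that $\|z\|_{L^\infty}, \|z_x\|_{L^\infty} \lesssim \|z\|_{H^2}$ while $\|z_{xx}\|_{L^2} \leq \|z\|_{H^2}$ — one bounds each monomial by $\|z\|_{H^2}^{a+b}\|z_{xx}\|_{L^2}^{c}\cdot(\text{const}) \leq C\|z\|_{H^2}^{a+b+c}$ when $c\le 1$, and in the finitely many cases with $c=2$ (necessarily $a+b\geq 1$) one puts the two $z_{xx}$ factors in $L^2$ and the remaining factors in $L^\infty$, again getting $\|z\|_{H^2}^{a+b+2}$. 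Since $a+b+c\geq 3$ in every case and $\|z\|_{H^2} \leq 1$ is assumed small, each term is $\leq C\|z\|_{H^2}^3$, and summing the finitely many monomials gives $|\mathcal N[z]| \leq K\|z\|_{H^2(\R)}^3$, completing the proof. The main obstacle is purely bookkeeping: correctly carrying out the binomial expansion of the sextic integrand of $E_5$ and verifying that the quadratic part reassembles into exactly $\mathcal Q$ — there is no analytic difficulty, only the risk of an algebraic slip, which is best guarded against by checking the result against the known mKdV case in \cite{AM}, of which this is the verbatim analogue.
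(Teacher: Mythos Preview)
Your proposal is correct and follows essentially the same route as the paper: direct expansion of $\mathcal H[B+z]$, identification of the linear term with $\int_\R z\,G[B]$ (which vanishes by Theorem~\ref{GBnvbc}), matching the quadratic part to $\mathcal Q$, and bounding the remainder. Your treatment of $\mathcal N[z]$ via the Sobolev embedding $H^2(\R)\hookrightarrow W^{1,\infty}(\R)$ is in fact more explicit than the paper's, which simply writes out $\mathcal N[z]$ and asserts the bound ``from direct estimates''; note also that in practice no cubic-or-higher term contains a factor $z_{xx}$ (since $u_{xx}$ appears only quadratically in $E_5$), so your $c=2$ case never actually arises.
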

\begin{proof}
Considering any 5th, 7th or 9th-mKdV breather $B$, we compute

$$\begin{aligned}
&\mathcal{H}[B+z] =  \frac 12 \int_\R (B+z)_{xx}^2 -5 \int_\R (B+z)^2(B+z)_x^2 +  \int_\R (B+z)^6\\
& + (\bt^2-\al^2)\frac 12  \int_\R (B+z)_x^2 -  (\bt^2 -\al^2)
\int_\R (B+z)^4  + \frac 12 (\al^2 +\bt^2)^2 \int_\R (B+z)^2\\
& = \frac 12 \int_\R B_{xx}^2 -5 \int_\R B^2B_x^2 +  \int_\R B^6
+ (\bt^2-\al^2)\frac 12  \int_\R B_x^2 - \frac 12 (\bt^2 -\al^2) \int_\R B^4  + \frac 12 (\al^2 +\bt^2)^2 \int_\R B^2\\
&  + \int_\R z\Big[ B_{4x} + 10 B B_{x}^2+ 10B^2 B_{xx} + 6 B^5 -   2(\beta^2 -\al^2) (B_{xx} + 2B^3) + (\al^2 + \bt^2)^2 B\Big]\\
&  +\frac 12 \Big[  \int_\R z_{xx}^2 + 2(\bt^2 -\al^2)\int_\R z_{x}^2 +(\al^2 +\bt^2)^2\int_\R z^2 + 10\int_\R B^2 z_{xx}z\\
&   -20\int_\R B B_{x} z_xz+\int_\R( 30 B^4 -10B_{x}^2 - 12(\beta^2 -\al^2)B^2)z^2\Big]\\
&  -\frac52\int_\R( z^2z_x^2 + 2B_{x}z^2z_x + 2B zz_x^2) + \int_\R 5B^3z^3 + \frac{15}{4}B^2z^4 +\frac 32 \int_\R B z^5 +\frac 14 \int_\R z^6\\
&  -2(\bt^2-\al^2)\int_\R B z^3  -\frac 12 (\bt^2-\al^2)\int_\R z^4.
\end{aligned}$$

We finally obtain:
\begin{eqnarray*}\mathcal{H}[B+z]=\mathcal{H}[B]+\int_{\mathbb{R}}G[B]z(t)\
dx+\frac{1}{2}\mathcal{Q}[z]+\mathcal{N}[z],\end{eqnarray*} where
$\mathcal{Q}$ is defined in \eqref{Qmu} and
\begin{eqnarray*}
G[B]:=B_{4x}+10BB_x^2+10B^2B_{xx}+6B^5-2(\beta^2-\alpha^2)(B_{xx}+2B^3)+(\alpha^2+\beta^2)^2B.\end{eqnarray*}

From Theorem (\ref{GBnvbc}), one has $G[B]\equiv 0$. Finally,
the term $\mathcal{N}[z]$ is given by
\begin{eqnarray}
\mathcal{N}[z]:&=&-10\int_{\mathbb{R}}Bzz_x^2+\frac{10}{3}\int_{\mathbb{R}}[B_{xx}z^3-5z^2z_x^2
+20B^3z^3+15B^2z^4+6Bz^5+z^6]\nonu\\
&&-4(\beta^2-\alpha^2)\int_{\mathbb{R}}Bz^3-(\beta^2-\alpha^2)\int_{\mathbb{R}}z^4.
\end{eqnarray}

Therefore, from direct estimates one has
$|\mathcal{N}[z]|\leq\mathcal{O}(\|z\|^3_{H^2(\mathbb{R})})$ as
desired.

\end{proof}

Using the previous Lemma, we are able to prove the main result of the paper.
\bigskip%

\section{Main Theorem}\label{5}

\medskip


\begin{thm}[$H^2$-stability of 5th, 7th and 9th order mKdV breathers]\label{T1gardner}

Let $\al, \bt \in \R\backslash\{0\}$  and $B=B_{\alpha,\beta}$ any 5th, 7th or 9th order mKdV breather. 
There exist positive parameters $\eta_0, A_0$, depending on $\al$ and $\beta$, such that the following holds.  Consider $u_0 \in
H^2(\R)$, and assume that there exists $\eta \in (0,\eta_0)$ such that 

\be\label{In} \|  u_0 - B(t=0;0,0) \|_{H^2(\R)} \leq \eta. \ee

Then there exist $x_1(t), x_2(t)\in \R$ such that the solution $u(t)$ of the Cauchy problem for the 5th  \eqref{5mkdv}, 7th \eqref{7mkdv}
or for the 9th \eqref{9mkdv}  equations, with initial data $w_0 \in H^2(\R)$, satisfies 

\be\label{Fn1} \sup_{t\in \R}\big\| w(t) - B(t;x_1(t),x_2(t)) \big\|_{H^2(\R)}\leq A_0 \eta, \ee with \be\label{Fn2} \sup_{t\in \R}|x_1'(t)| +|x_2'(t)| \leq KA_0 \eta,
\ee for a constant $K>0$.

\end{thm}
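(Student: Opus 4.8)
The plan is to follow the standard modulation-plus-coercivity (Weinstein/Grillakis--Shatah--Strauss) scheme, adapted to the breather situation exactly as in the classical mKdV case of Alejo--Mu\~noz \cite{AM}, using the ingredients already assembled above: the variational identity \eqref{EE}, the coercivity estimate \eqref{coee} of the quadratic form $\mathcal Q$, and the conservation of $\mathcal H$ for $H^2$-solutions of \eqref{5mkdv}, \eqref{7mkdv}, \eqref{9mkdv}. First I would set up modulation parameters: for a solution $u(t)$ staying $H^2$-close to the family of breathers, apply the implicit function theorem to choose $C^1$ shift parameters $x_1(t),x_2(t)$ so that the error $z(t):=u(t)-B(t;x_1(t),x_2(t))$ satisfies the two orthogonality conditions $\int_\R z(t)B_1=\int_\R z(t)B_2=0$, where $B_1,B_2$ are the kernel directions \eqref{B12}. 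This is possible on a maximal time interval because $B_1,B_2$ are linearly independent Schwartz functions and the relevant Jacobian is nondegenerate; differentiating the orthogonality relations in $t$ and using the equation gives the modulation ODEs controlling $x_1'(t),x_2'(t)$, yielding the bound \eqref{Fn2} with the right-hand side $O(\|z\|_{H^2})$.

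The second ingredient is the energy expansion. Since $B$ is a critical point of $\mathcal H$ by \lemref{crit} (i.e. $G[B]\equiv 0$ by Theorem \ref{GBnvbc}), conservation of $\mathcal H$ along the flow and along the orbit of breathers gives, for all $t$,
\[
\tfrac12\mathcal Q[z(t)] + \mathcal N[z(t)] \;=\; \mathcal H[u(t)] - \mathcal H[B(t;x_1(t),x_2(t))] \;=\; \mathcal H[u_0] - \mathcal H[B(0;0,0)] \;=\; O(\eta^2),
\]
using \eqref{In} and Taylor expansion of $\mathcal H$ at $B(0;0,0)$. Combined with $|\mathcal N[z]|\le K\|z\|_{H^2}^3$ this bounds $\mathcal Q[z(t)]$ by $O(\eta^2)+O(\|z(t)\|_{H^2}^3)$ uniformly in $t$. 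To convert this into control of $\|z(t)\|_{H^2}$ I need coercivity, but \eqref{coee} requires a \emph{third} orthogonality condition, $\int_\R z_0 B_{-1}=0$, where $B_{-1}$ is the eigenfunction $B_0$ of \eqref{B0} (or the direction killing the negative eigenvalue of $\mathcal L$). The modulation above only provides two. The standard fix is to add a scaling/phase correction: near $z$ replace the breather $B_{\al,\bt}$ by a breather with slightly modulated parameters $\al(t),\bt(t)$, chosen so that the secular direction is projected out; equivalently one decomposes $z = z^\flat + a_{-1}B_{-1}$ and shows that, because $\mathcal H[B_{\al,\bt}]$ is itself (locally) extremized among breathers in the $(\al,\bt)$-parameters — this is where \eqref{Pos}, \eqref{Neg}, and \eqref{negB0} enter, giving the sign structure of $\mathcal Q$ on the two-dimensional scaling space — the $a_{-1}$-component is controlled quadratically by the constrained part and by the conserved mass/energy drift, so that $\mathcal Q[z]\gtrsim \|z\|_{H^2}^2 - C\eta^2$ with the bad directions absorbed.

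Putting these together: on the maximal interval where $\|z(t)\|_{H^2}$ stays below a small threshold, coercivity plus the energy bound give $\nu_0\|z(t)\|_{H^2}^2 \le C\eta^2 + C\|z(t)\|_{H^2}^3$, hence a bootstrap (continuity) argument yields $\|z(t)\|_{H^2}\le A_0\eta$ for all $t\in\R$ as long as $\eta_0$ is chosen small, which closes the argument and proves \eqref{Fn1}; feeding this back into the modulation ODEs gives \eqref{Fn2}. The main obstacle I expect is precisely the handling of the negative direction of $\mathcal L$: the quadratic form $\mathcal Q$ is not positive definite on all of the codimension-two space cut out by the two kernel orthogonalities, so one must exploit the conservation laws (the fixed values of $M$ and of the combination appearing in $\mathcal H$) together with the scaling identities \eqref{Pos}--\eqref{Neg}--\eqref{negB0} to show the would-be unstable mode is slaved — this is the technical heart, and it is exactly the point where the higher-order structure must be checked to behave identically to the classical mKdV breather, which the earlier sections have set up by producing the \emph{same} operator $\mathcal L$ and the \emph{same} fourth-order elliptic equation.
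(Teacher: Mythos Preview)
Your proposal is correct and follows essentially the same approach as the paper: the paper's own proof is deliberately brief, deferring entirely to \cite[Theorem~6.1]{AM} once the key structural facts (the fourth-order elliptic identity of Theorem~\ref{GBnvbc}, the coercivity \eqref{coee}, and the unique negative eigenvalue of $\mathcal L$) have been established for the higher-order breathers, and what you have written is precisely an outline of that referenced argument --- modulation in $x_1,x_2$ to kill the kernel directions, conservation of $\mathcal H$ to bound $\mathcal Q[z]$, control of the single negative direction via the conserved quantities and the scaling identities \eqref{Pos}--\eqref{negB0}, and a bootstrap to close. One small point of notation: $B_{-1}$ in \eqref{Or1} denotes the negative eigenfunction of $\mathcal L$, which is not the same object as $B_0$ from \eqref{B0}; the latter satisfies $\mathcal L[B_0]=-B$ and is the direction used, together with conservation of mass, to slave the $B_{-1}$-component --- but your description of the mechanism is otherwise accurate.
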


\begin{rem}

Note that the same result is true for the \emph{negative} breather $-B_{\al,\bt}$ which is also a solution of (\ref{5mkdv}) or
(\ref{7mkdv}).
\end{rem}

\medskip

\begin{proof}[Proof of Theorem \ref{T1gardner}] 
We take $u=u(t) \in H^2(\R)$ as the associated local in time solution of the Cauchy problem associated
 to (\ref{5mkdv}), \eqref{7mkdv} or (\ref{9mkdv}), with  initial condition $u(0)=u_0\in H^2(\R)$ (cf. \cite{Lin}, \cite{Kwon}, \cite{Gru}).
Therefore once we guaranteed for the case of 5th, 7th and 9th-mKdV breathers, that they satisfy the same 4th order stationary ODE \eqref{EcBnvbc} as the \emph{classical} mKdV breather,
that a suitable coercivity property holds for the bilinear form $\mathcal Q$ associated to any of these higher order breathers (see \eqref{coee}), and the existence of a
unique negative eigenvalue \eqref{Lspectral} of the linearized operator $\mathcal L$ associated again to these higher order breathers, the stability proof follows the same steps, 
namely, we proceed assuming that the maximal time of stability $T$ is finite and we arrive to a contradiction. In fact it is completely similar as the $H^2$-stability of classical mKdV breathers \cite[Theorem 6.1]{AM}.

\end{proof}

\appendix

\section{11th-mKdV equation}\label{9mkdvApp}
%
%
%
%
%
For the sake of completeness, we show the 11th order mKdV equation. It is written as follows:

\medskip

\be\label{y1y2GE11}
\begin{aligned}
&u_t + \partial_x\Big(
u_{10x}+22 u^2 u_{8x}+198 u^4u_{6x}+924 u^6 u_{4x}+506 u \left(u_{4x}\right)^2
+3036 u^3 \left(u_{3x}\right)^2+2310 u^8 u_{xx}\\
&+8316 u^5 \left(u_{xx}\right)^2+9372 u^2 \left(u_{xx}\right)^3+9240 u^7 \left(u_x\right)^2
+26796 u^3 \left(u_x\right)^4+176 u u_x u_{7x}+484 u 
u_{xx} u_{6x}+462 \left(u_x\right)^2 u_{6x}\\
&+836 u u_{3x}u_{5x}+2376 u^3 u_xu_{5x}
+5016 u^3 u_{xx} u_{4x}
+2706 \left(u_{xx}\right)^2 u_{4x}+11220 u^2 \left(u_x\right)^2 u_{4x}+3498 u_{xx} \left(u_{3x}
\right)^2\\
&+11088 u^5 u_x u_{3x}+21120 u \left(u_x\right)^3 u_{3x}+54516 u^4 \left(u_x\right)^2 u_{xx}
+44748 u \left(u_x\right)^2 \left(u_{xx}\right)^2+13398 \left(u_x\right)^4 u_{xx}\\
&+2376 u_x u_{xx} u_{5x}+3696 u_x u_{3x} u_{4x}+39336 u^2 u_x u_{xx} u_{3x}+252 u^{11}\Big)=0.
\end{aligned}\ee

\medskip
\noindent
Moreover, we are able to obtain the 11th order mKdV breather solution, in the same way we used to get \eqref{579Bre}:

\begin{defn}[11th-mKdV breather]\label{11breather} Let $\al, \bt >0$
and $x_1,x_2\in \R$. The real-valued  breather solution of the 11th-mKdV equation \eqref{9mkdvApp} is given explicitly by the formula
\be\label{11Bre}B\equiv B_{\al, \bt}(t,x;x_1,x_2)   
 :=   2\partial_x\Bigg[\arctan\Big(\frac{\bt}{\al}\frac{\sin(\al y_1)}{\cosh(\bt y_2)}\Big)\Bigg],
\ee
with $y_1$ and $y_2$
\be\label{y1y2GE11}
\begin{aligned}
&y_1 = x+ \delta_{11} t + x_1, \quad y_2 = x+ \ga_{11} t + x_2,\end{aligned}\ee
\noindent and with velocities
\be\label{speeds11}
\begin{aligned}
& \delta_{11} = \alpha ^{10}-55 \alpha ^8 \beta ^2+330 \alpha ^6 \beta ^4-462 \alpha ^4 \beta ^6 +165 \alpha ^2 \beta ^8-11\bt^{10},\\
 & \gamma_{11} = 11 \alpha ^{10} -165 \alpha ^8 \beta ^2+462 \alpha ^6 \beta ^4-330 \alpha ^4 \beta ^6+55 \alpha ^2 \beta ^8-\beta ^{10}.
\end{aligned}\ee
\end{defn}

\end{document}